\newtheorem{lemma}{Lemma}
\newtheorem{problem}{Problem}
\newcommand{\hide}[1]{}
\newcommand*{\QEDB}{\null\nobreak\hfill\ensuremath{\square}}%
\def\BibTeX{{\rm B\kern-.05em{\sc i\kern-.025em b}\kern-.08em
    T\kern-.1667em\lower.7ex\hbox{E}\kern-.125emX}}
\begin{document}
\allowdisplaybreaks
\raggedbottom

\title{REFORM: Fast and Adaptive Solution for \\Subteam Replacement}

\author{\IEEEauthorblockN{Zhaoheng Li, Xinyu Pi, Mingyuan Wu, Hanghang Tong}
\IEEEauthorblockA{\textit{Department of Computer Science} \\
\textit{University of Illinois at Urbana-Champaign}\\
Urbana, United States \\
\{zl20, xinyupi2, mw34, htong\}@illinois.edu}
}

\maketitle

\begin{abstract} 
\,{\em Subteam Replacement}: given a team of people embedded in a social network to complete a certain task, and a subset of members (i.e., subteam) in this team which have become unavailable, find another set of people who can perform the subteam’s role in the larger team. We conjecture that a good candidate subteam should have high skill and structural similarity with the replaced subteam while sharing a similar connection with the larger team as a whole. Based on this conjecture, we propose a novel graph kernel which evaluates the goodness of candidate subteams in this holistic way freely adjustable to the need of the situation. To tackle the significant computational difficulties, we equip our kernel with a fast approximation algorithm which (a) employs effective pruning strategies, (b) exploits the similarity between candidate team structures to reduce kernel computations, and (c) features a solid theoretical bound on the quality of the obtained solution. We extensively test our solution on both synthetic and real datasets to demonstrate its effectiveness and efficiency. Our proposed graph kernel outputs more human-agreeable recommendations compared to metrics used in previous work, and our algorithm consistently outperforms alternative choices by finding near-optimal solutions while scaling linearly with the size of the replaced subteam.
\end{abstract}

\begin{IEEEkeywords}
Graph mining, Graph kernels, Team recommendation, Approximation algorithm
\end{IEEEkeywords}

\section{Introduction}
Replacing a specific team member in a team is a common problem across many application domains which can arise in many situations, whether it is to improve team composition, address a change of role of the team, or simply because the team member has become unavailable. In environments with highly dynamic team structures, it is often the case that a larger scale of this problem needs to be addressed: what if \textit{multiple} team members need to be replaced at the same time? Representative examples include the US movie industry with its long-standing practice of replacing starring actors/actresses during long productions to evaluate their value\cite{imdbexample}, and academia with its steadily growing project team size leading to increased probability of members being replaced sometime during the project\cite{academiaexample}.

Based on this observation, we introduce a novel yet challenging problem, namely \textit{Subteam Replacement}: given (1) a large social network modeling the skills of individuals and their familiarity with working with each other, (2) a team of people within this network formed to complete some specific task, and (3) a subset of team members (i.e., \textit{Subteam}) which 
has become unavailable, we aim to construct the best replacement subteam from the social network to fill in the role of the replaced subset of team members.

Many characteristics of effective teamwork have been identified and studied in organizational management literature. Most related to our problem are that team members prefer to work with people they have worked
with before\cite{hinds2000choosing} and that teams with familiarity between members tend to have better performance\cite{familiarTeam}. This naturally leads to the idea that a good candidate subteam should have good structural similarity, preserving the links of the replaced subteam\cite{TeamReplacement}.

Additionally, to preserve the ability of the team to perform the assigned task, the replacement subteam should exhibit skill similarity, having a set of skills similar to the replaced subteam. Taking inspiration from the work in \cite{AtypicalCombinations} which points out that atypical combinations of ideas is a major factor of success in scientific research, and other recent research\cite{haas2016secrets} identifying the balance of skills within a team to be a key indicator of success rather than individual skill, we also propose that during the evaluation of the candidate subteam, instead of matching the properties of individuals, we instead aim to match the \textit{interaction} between individuals. 

It is common practice in fields such as business analytics to model a social network of individuals as a graph\cite{bonchi2011social}. Given the rich history of using kernel-based methods to measure graph similarity \cite{GraphKernelSurvey}, we propose an adaptation of the random walk graph kernel \cite{SvnKernel} for edge-labeled graphs as a suitable way to measure the goodness of candidate replacement subteams according to our observations: it holistically evaluates subteams in terms of skill and structure both locally and globally from the perspective of human interactions.

However, \textit{Subteam Replacement} is computationally hard. A simple reduction from best-effort subgraph search on weighted, attributed graphs \cite{GraphSearchMultiAttribute} shows that the problem of finding the best replacement is in fact NP-Complete, and brute-forcing every possible subteam is necessary to find the optimal solution. For a social network with $n$ members and a subteam to be replaced of size $s$ will require $O(_nC_s) = O( \frac{n!}{s!(n-s)!})$ kernel computations, which is computationally unfeasible for all but the most trivial problem settings. For example, given a network of size $n = 70$ and a subteam of size $s = 4$ out of a subgraph of size 9 as the team, we found that it can still take upwards of 2 hours to find the replacement using brute force.

To address this issue, it is necessary to both (1) speedup individual graph kernel computations, and (2) reduce the number of subteams we evaluate. We design a fast approximation algorithm that exploits similarity between candidate subteams to speedup subsequent kernel computations to address the former, and only evaluating a few of the most promising subteams to return a near-optimal solution to address the latter. Through comprehensive experiments, we show that our algorithm (1) has a strong theoretical bound on the quality of the solution it returns, (2) recommends subteams highly correlated with human perception, and (3) scales linearly with the size of the replaced subteam making it possible to perform Subteam Replacement on complicated large-scale team structures.

The main contributions of the paper are as follows:\begin{itemize}
    \item \textbf{Problem Definition}. We provide the formal definition for \textit{Subteam Replacement}. 
    \item \textbf{Adaptive Similarity Measure}. We propose a novel graph kernel as our similarity measure between teams adjustable to different circumstances the team is formed for.
    \item \textbf{Algorithm and Analysis}. We propose \textsc{REFORM}, an efficient algorithm for solving \textit{Subteam Replacement} addressing each of its computational challenges, and derive a solid theoretical bound for the quality of the solution obtained from \textsc{REFORM}.
    \item \textbf{Experiments}. We demonstrate the efficacy of \textsc{REFORM} on various real-world datasets using both quantitative and user studies. 
\end{itemize}

\begin{figure}[htbp]
\centerline{\includegraphics[width=\linewidth]{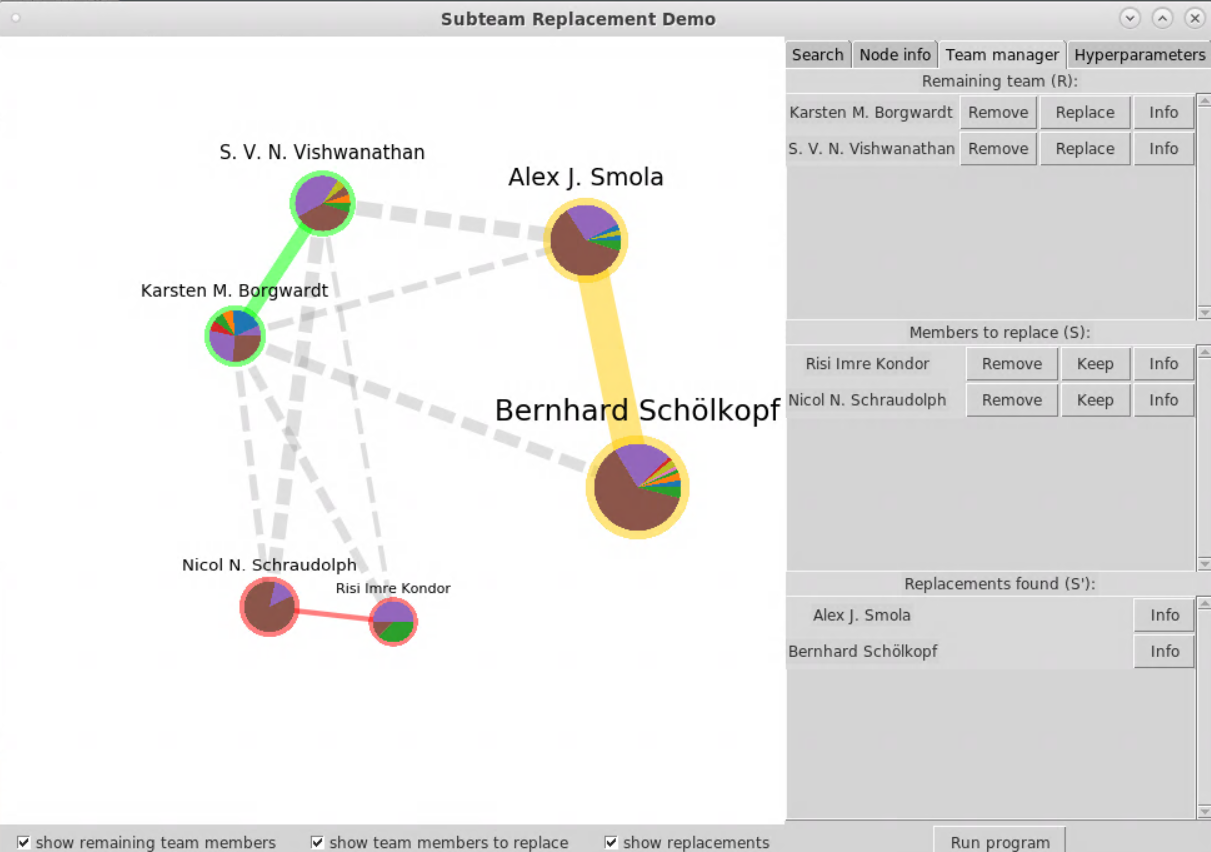}}
\captionsetup{labelformat=empty}
\caption{\textbf{Fig. 1.} Example of Subteam Replacement. Green nodes are remaining team members, red nodes are the replaced team members, and yellow nodes are the found replacements.}
\end{figure}

\section{Problem definition}
\begin{table}[ht]
\caption{Table of Symbols}

\begin{tabular}{ll}
\toprule
Symbols              & Definition                            \\ \midrule
$G = (V, E)$         & The social network                    \\
$A_{n\times n}$            & The adjacency matrix of $G$           \\
$L_{n\times l}$            & The skill indicator matrix            \\
$W_{l\times l}$            & The skill pair relevance matrix            \\ \hline
$T \subset V$        & The given team                        \\
$S \subset T$        & The set of team members to replace    \\
$R = T \setminus S$         & The remaining team after removing $S$ \\ \hline
$G_T = \{A[T,T], L[T,:]\}$ & The subgraph of G indexed by T        \\
$A_T = A[T,T]$ & The submatrix of A indexed by T       \\
$L_T = L[T,:]$ & The submatrix of L indexed by T       \\ \hline
$l$ & The total number of skills       \\
$t$ & The team size ($|T| = t$)      \\
$s$ & The number of people to replace ($|S| = s$)      \\
$n$ & The total number of individuals in $A$      \\ 
\bottomrule
\end{tabular}
\end{table}
Table 1 lists the symbols that we will use throughout the paper. 
We describe the input data to \textit{Subteam Replacement} as follows:
\begin{enumerate}
\item A social network as $n$ individuals organized in a weighted, labeled graph $G = (V, E)$ with non-negative edge weights, where each vertex corresponds to one individual and each edge represents a relationship between two individuals. As the labels are an integral part of the graph, we will also refer to $G$ as $G = \{A, L\}$.  
    The graph structure will be represented by $G$'s non-negative $n\times n$ adjacency matrix $A$. Edge weights correspond to the strength of relationships; a heavily weighted edge represents two individuals are familiar with working with each other and thus have a desirable low communication cost between them.
    The node labels will be represented by the non-negative $n\times l$ skill matrix $L$. Each $i^\textrm{th}$ row vector represents the $i^\textrm{th}$ individual's proficiency in each of the $l$ skills; for example, if the set of skills $l = \{HTML/CSS, C\!+\!+, SQL\}$, then an individual having the skill vector $[0, 1, 1]$ means that they have skill in $C\!+\!+$ and $SQL$ but no skill in $HTML/CSS$.
\item A $l\times l$ non-negative upper triangular matrix $W$, where each entry $W[i,j]$, $i\! \leq\! j$ represents the relevance of the skill pair $i$ and $j$ in completing the assigned task. For example, if the skills are $\{HTML/CSS, C\!+\!+, SQL\}$, having $W = \{\!\{0, 1, 0\}, \{0, 0, 1\}, \{0, 0, 0\}\!\}$ will mean that the skill pairs of (1) $HTML/CSS$ and $C\!+\!+$, (2) $SQL$ and $C++$ are relevant and complementary in completing the task.

\item A team of individuals $T \subsetneq V$, and the subset of members to replace $S \subsetneq T$. We will also denote the remaining team members as $R = T \setminus S$, and we would like to impose the following two size constraints, including 
(1) $1 \leq |S| \leq |T| - 1$, and 
(2) $n - |T| \geq |S|$.\footnote{This ensures we have enough candidates to form a replacement subteam $S'$ given $|S'| = |S|$.}
\end{enumerate}

We refer to matrix indexing using NumPy convention, i.e.  $A[T, T]$ is the submatrix of $A$ formed by the rows and columns indexed by elements in $T$, and $A[T,:]$ is the submatrix of $A$ formed by the rows indexed by elements in $T$. Given the team $T$ and our social network $G$, the team's network can be represented as a subgraph of $G$, namely $G(T) = \{A[T, T], L[T,:]\}$.

If we replace $S$ with another subteam $S'$, the new team can be represented by the subgraph $G(R\cup S') = \{A[R\cup S', R\cup S'], T[R\cup S',:]\}$. We will be studying the case where $|S'| = |S|$ in our paper. For the sake of clarity, we use the following notation to represent these elements in this paper:\begin{itemize}
    \item $A_T = A[T, T]$
    \item $L_T = L[T, :]$
    \item $G_T = G(T) = \{A[T, T], L[T, :]\}$
\end{itemize}

With the above notations and assumptions, we formally define our problem as follows:
\begin{problem}{Subteam Replacement}
\label{prof:subteam:replacement}
\begin{description}
\item[Given:] (1) a labelled social network $G = (V, E) = \{A, L\}$, (2) a skill pair relevance matrix $W$, 
(3) a team of individuals $T \subsetneq V$, and (4) a subteam of individuals to be replaced $S \subsetneq T$;
\item[Find:] The best replacement subteam $S' \in (T \setminus S)$ to fill in the role of the replaced subteam $S$ in $T$.
\end{description}
\end{problem}

\section{Proposed Methodology}
In this section, we present our solution to \textit{Subteam Replacement}. We first introduce our goodness measure - the random walk graph kernel for edge-labeled graphs - that evaluates subteams according to our objectives. Next, we introduce our speedup algorithm to efficiently replace 1 member, then build upon this algorithm to obtain our greedy approximation algorithm for simultaneous multiple member replacement. We also discuss the implications of the mathematical properties of our algorithm.

\subsection{Proposed Subteam Goodness Measure}

As mentioned in the introduction, the random walk graph kernel for edge-labeled graphs is a fitting goodness measure that can evaluate subteams according to the identified factors contributing to a successful team. Presented below is the basic form of an edge-labeled random walk graph kernel\cite{SvnKernel}:

\begin{equation}
\textnormal{Ker}(G_1, G_2) = y^T(I-c(E_x\odot A_x))^{-1}x
\end{equation}

Where $\otimes$ and $\odot$ are respectively the Kronecker and Hadamard (elementwise) product between 2 matrices, $y = y_1\otimes y_2$ and $x = x_1\otimes x_2$ are uniform starting and stopping vectors, $A_x = A_1 \otimes A_2$ is the adjacency matrix of the product graph of $G_1$ and $G_2$, and $E_x\! =\! \sum^{k}_{i=1}E_1(:,:,i)\!\otimes\! E_2(:,:,i)$ is the edge attribute similarity matrix for the two graphs defined using the tensors $E_1$ and $E_2$ containing the $k$ attributes for each edge in $G_1$ and $G_2$.


We next customize the kernel by defining each slice in our edge attribute tensor in terms of the skill matrix $L$:

\begin{equation}
\medmath{E(:,:,i+(j-1)l) = \textrm{max}(L[:,i]L[:,j]^T, L[:,j]L[:,i]^T)}
\end{equation}

We define the $\textrm{max}$ operation in this case to be taking the elementwise maximum of the argument matrices:

\begin{equation}
\textrm{max}(A_1, A_2)_{ij} = \textrm{max}(A_{1_{ij}}, A_{2_{ij}}) \forall i,j
\end{equation}

Adhering to the idea of matching skill interaction, each slice in the tensor measures the extent that each pair of individuals can utilize a pair of skills. For example, for every cell $[a, b]$ in equation 2, the value is a measurement of how well $a$ and $b$ can combine skills $i$ and $j$: either $a$ performs $i$ and $b$ performs $j$ or vice versa, whichever is more desirable, hence the $\textrm{max}$.

Putting everything together, we substitute the slice-wise tensor product in the above edge-labeled graph kernel (equation 1) with $W$ and $L$ to arrive at our proposed kernel measure:

\begin{multline}
\medmath{E_x = \sum^{l}_{i=1}\sum^{l}_{j=i} W[i,j] * (\textrm{max}(L_1[:,i]L_1[:,j]^T, L_1[:,j]L_1[:,i]^T)}\\ 
\medmath{\otimes (\textrm{max}(L_2[:,i]L_2[:,j]^T, L_2[:,j]L_2[:,i]^T)}
\end{multline}

We scale each skill pairing with $W$ representing how important the skill pairing is for the task at hand; note that setting $W$ as upper triangular effectively avoids double-counting pairs of distinct skills.

{\em Remark.} Our proposed edge-labeled graph kernel is similar to but bears subtle difference from the node-labelled graph kernel in \cite{TeamReplacement}. Specifically, we find that the node-labelled graph kernel might overly emphasize the individual's skills over their network connectivity, which could be detrimental in the process of pruning unpromising candidates. For example, a good candidate might be pruned when using the node-labelled kernel, while our proposed edge-labelled kernel will prevent such good candidates from being pruned as we will show in the next subsection.

\subsection{Single Member Replacement}

Finding the optimal solution for \textit{Subteam Replacement} is NP-Complete. A simple and yet effective way to approach an NP-Complete problem would be to use a greedy approximation algorithm picking the best candidate to add to the team at each step. To do this, we will first present an efficient algorithm for replacing 1 person.

Since we define the goodness of the team by the ability of its team members to interact with each other, we naturally want to avoid candidates which have no connections to the remaining team $R$. Therefore, pruning can be employed to achieve a considerable speedup while having minimal impact on the quality of the suggested candidate. With our kernel, we can prove mathematically that pruning has \textit{no} impact on the quality of the solution - we will never prune the optimal candidate.

Table 2 contains some recurring expressions which we use as shorthand in the following proof and sections thereafter.
\begin{table}[ht]
\caption{Table of Shorthand expressions}

\begin{tabular}{ll}
\toprule
Symbol              & Shorthand for                            \\ \midrule
$E_{\textrm{max}(i,j)}$         & $\textrm{max}(L[:,i]L[:,j]^T, L[:,j]L[:,i]^T)$                    \\
$E_{\alpha\times \beta}$           & $ \sum^l_{i=1}\sum^l_{j=i}W[i,j]*E_{\alpha_{ \textrm{max}(i,j)}}\otimes E_{\beta_{\textrm{max}(i,j)}}$            \\
$G_{\alpha \times \beta}$            & $c(E_{\alpha \times \beta} \odot (A_\alpha \otimes A_\beta))$            \\ \bottomrule
\end{tabular}
\end{table}
\begin{lemma}
\textsc{Validity of pruning.} Given any team $T$ and any person to replace $p \in T$, and 2 candidates $q, q'$ not in $T$, if $q$ is connected to at least 1 member in $T$ and $q'$ is not connected to any members in $T$, there is:

\begin{equation}
\textnormal{Ker}(G_T, G_{(T\setminus\{p\})\cup\{q\}}) \geq \textnormal{Ker}(G_T, G_{(T\setminus\{p\})\cup\{q'\}})
\end{equation}
\end{lemma}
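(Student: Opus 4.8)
The plan is to work directly with the matrix form of the kernel, $\textnormal{Ker}(G_\alpha,G_\beta)=y^T(I-G_{\alpha\times\beta})^{-1}x$ in the shorthand of Table~2, and exploit the fact that because $A$, $L$, $W$ are non-negative and $c>0$, every product matrix $G_{\alpha\times\beta}$ is entrywise non-negative. Write $R=T\setminus\{p\}$ and define the two candidate teams $T_q:=R\cup\{q\}$ and $T_{q'}:=R\cup\{q'\}$. Since both share the common core $R$ and differ only in the single adjoined node, I would compare the two product matrices $G_{T\times T_q}$ and $G_{T\times T_{q'}}$, both taken against the fixed graph $G_T$, and reduce the claim to an entrywise comparison of matrices.

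First I would order the product nodes $(u,v)$, $u\in T$, $v\in T_\bullet$, according to whether the second coordinate lies in $R$ or equals the newly added node. Under this ordering each product matrix acquires a $2\times2$ block form whose leading block $M$ is the product matrix of $G_T$ with $G_R$. Crucially, $M$ is \emph{identical} in the two comparisons, because the subgraph induced on $R$ has the same adjacency $A[R,R]$ and the same labels $L[R,:]$ whether we adjoin $q$ or $q'$. Moreover, since the starting and stopping vectors are uniform and the two product graphs have the same number of nodes $|T|^2$, the vectors $x$ and $y$ are the same non-negative vectors in both cases, so the comparison is genuinely apples-to-apples.

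The key structural observation is that $q'$ is not connected to any member of $T\supseteq R$, so $q'$ is an \emph{isolated} vertex of $G_{T_{q'}}$: its row and column in $A_{T_{q'}}$ vanish, which forces every block of $G_{T\times T_{q'}}$ touching the $q'$-coordinate to be zero, giving $G_{T\times T_{q'}}=\mathrm{diag}(M,0)$. By contrast $G_{T\times T_q}=\bigl(\begin{smallmatrix}M & B\\ C & D\end{smallmatrix}\bigr)$ with $B,C,D$ entrywise non-negative, so $G_{T\times T_q}\ge G_{T\times T_{q'}}$ entrywise. I would then invoke the Neumann series $(I-G)^{-1}=\sum_{k\ge0}G^k$, valid because $c$ is chosen so the kernel converges and $\rho(G_{T\times T_{q'}})\le\rho(G_{T\times T_q})<1$ by monotonicity of the spectral radius for non-negative matrices. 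Since $0\le P\le Q$ entrywise implies $P^k\le Q^k$ for every $k$ and hence $(I-P)^{-1}\le(I-Q)^{-1}$, applying this with $P=G_{T\times T_{q'}}$ and $Q=G_{T\times T_q}$, then sandwiching by the non-negative $y^T(\cdot)x$, yields exactly $\textnormal{Ker}(G_T,G_{T_q})\ge\textnormal{Ker}(G_T,G_{T_{q'}})$.

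The hard part here is bookkeeping rather than depth: I must check carefully that the shared block $M$ really coincides across the two comparisons and that the uniform vectors $x,y$ align under the chosen ordering, so that the inequality is comparing the same quantity. A point worth stating explicitly is that the hypothesis on $q'$ (isolation) does all the work, whereas the hypothesis that $q$ touches $T$ is only needed to make the bound interesting rather than an equality; the argument in fact gives $\ge$ for an arbitrary $q$, which is precisely what the pruning step requires.
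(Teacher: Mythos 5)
Your proposal is correct and follows essentially the same route as the paper's proof: both reduce the claim to an entrywise dominance $G_{T\times T_{q'}} \le G_{T\times T_q}$ of the non-negative product matrices (which holds because the isolated $q'$ zeroes out every entry touching its coordinate while the shared core $R$ makes all other entries coincide), and then lift this dominance through the Taylor/Neumann series term by term before sandwiching with the non-negative uniform vectors $y$, $x$. The only cosmetic difference is that the paper establishes the dominance slice-by-slice via the mixed Kronecker--Hadamard product identity with an explicit induction on powers, whereas you argue directly in block form and cite the general monotonicity fact; your side remark that $q$'s connectivity is not actually needed for the weak inequality is also consistent with the paper's argument.
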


\begin{proof} Let:\begin{itemize}
    \item $G_T = G_0 = \{A_0, L_0\}$
    \item $G_{(T\setminus\{p\})\cup\{q\}} = G_1 = \{A_1, L_1\}$
    \item $G_{(T\setminus\{p\})\cup\{q'\}} = G_2 = \{A_2, L_2\}$
    \item $E_{\textrm{max}(i,j)} = \textrm{max}(L[:,i]L[:,j]^T, L[:,j]L[:,i]^T)$
\end{itemize}
By Taylor expansion of equation 1, we have:

$\textnormal{Ker}(G_0, G_1) = y^T\sum^\infty_{k=0}c(E_{0\times1}\odot(A_0\otimes A_1))^kx$, where $E_{0\times1} = \sum^l_{i=1}\sum^l_{j=i}W[i,j]*E_{0_{\textrm{max}(i,j)}}\otimes E_{1_{\textrm{max}(i,j)}}$

$\textnormal{Ker}(G_0, G_1) = y^T\sum^\infty_{k=0}c(E_{0\times2}\odot(A_0\otimes A_2))^kx$, where $E_{0\times2} = \sum^l_{i=1}\sum^l_{j=i}W[i,j]*E_{0_{\textrm{max}(i,j)}}\otimes E_{2_{\textrm{max}(i,j)}}$

It is sufficient to show that $c(E_{0x1}\odot(A_0\otimes A_1))^k \geq c(E_{0x2}\odot(A_0\otimes A_2))^k$ for all k; we define the element-wise matrix inequality $A \geq B$ as $A \geq B \Longleftrightarrow A_{ij} \geq B_{ij} \forall i,j$.

\textsc{Proof by induction - base case}: $k = 0$ is trivial, therefore we will start with $k = 1$:

LHS: $c(E_{0x1}\odot(A_0\otimes A_1))$

$ = \sum^l_{i=1}\sum^l_{j=i}cW[i,j]\!*\!(E_{0_{\textrm{max}(i,j)}}\!\otimes\! E_{1_{\textrm{max}(i,j)}})\!\odot\!(A_0\!\otimes\! A_1)$

$ = \sum^l_{i=1}\sum^l_{j=i}cW[i,j]\!*\!(E_{0_{\textrm{max}(i,j)}}\!\odot\! A_0) \!\otimes\!(E_{1_{\textrm{max}(i,j)}}\!\odot\! A_1)$

$ \geq \sum^l_{i=1}\sum^l_{j=i}cW[i,j]\!*\!(E_{0_{\textrm{max}(i,j)}}\!\odot\! A_0) \!\otimes\!(E_{2_{\textrm{max}(i,j)}}\!\odot\! A_2)$

$= c(E_{0x2}\odot(A_0\otimes A_2))$ (RHS)

Where $(E_{1_{\textrm{max}(i,j)}}\odot A_1) \geq (E_{2_{\textrm{max}(i,j)}}\odot A_2) \forall i, j$ because $A_1$ has at least 1 nonzero element in the last row and column due to $q$ having at least 1 connection with $R$, while $A_2$ has only zeroes in its last row and column due to $q'$ having no connections with $R$. The first $t-1$ rows and columns of $A_1$ and $A_2$, $E_{1_{\textrm{max}(i,j)}}$, $E_{2_{\textrm{max}(i,j)}}$ are otherwise identical as they both pairwise represent the adjacency and skill interaction of the remaining team $R$.

\textsc{Proof by induction - inductive step}: Assuming $c^m(E_{0x1}\odot(A_0\otimes A_1))^m \geq c^m(E_{0x2}\odot(A_0\otimes A_2))^m \forall m \in \{0,...,k-1\}$:

$c^k(E_{0x1}\odot(A_0\otimes A_1))^k$

$\geq c^{k-1}(E_{0x1}\odot(A_0\otimes A_1))^{k-1}c(E_{0x2}\odot(A_0\otimes A_2))$

$\geq c^k(E_{0x2}\odot(A_0\otimes A_2))^k$\\
where we have both \\$c(E_{0x1}\odot(A_0\otimes A_1))\geq c(E_{0x2}\odot(A_0\otimes A_2))$ and\\ $c^{k-1}(E_{0x1}\odot(A_0\otimes A_1))^{k-1} \geq c^{k-1}(E_{0x2}\odot(A_0\otimes A_2))^{k-1}$ by the inductive assumption, which completes the proof. 

\end{proof}


We now present our speedup for computing the updated graph kernel for each candidate. We will be reusing the following notations from the previous lemma:\begin{itemize}
    \item $G_T = G_0 = \{A_0, L_0\}$ is the subgraph of the input team.
    \item $G_{(T\setminus\{p\})\cup\{q\}} = G_1 = \{A_1, L_1\}$ is the subgraph of the input team after removing $p$ and inserting some arbitrary candidate $q$.
\end{itemize}
    
Without loss of generality, we assume that $q$ is the last person in $T$; the last row of $L$ contains the skill vector for $q$.

Upon inspection, the input team $T$ and the candidate team $(T\setminus\{p\})\cup\{q\}$ differ only by one person. This means that the matrices $A_0$ and $A_1$ differ only in their last row and column, and the matrices $L_0$ and $L_1$ differ only in their last row.

Given these observations, we can rewrite $A_1$ and $L_1$ as follows:\begin{itemize}
    \item $A_1 = A_R + A_q$, where $A_R$ is $A_1$ with the last row and column zeroed out, and $A_q$ is $A_1$ with all but the last row and column zeroed out.
    \item $L_1 = L_R + L_q$, where $L_R$ is $L_1$ with the last row zeroed out, and $L_q$ is $L_1$ with all but the last row zeroed out.
\end{itemize}

We can now rewrite the graph kernel using our newly defined notations:

\begin{multline*}
\textnormal{Ker}(G_1, G_0) \\
\medmath{= y^T(I\!-\!c(\sum^l_{i=1}\sum^l_{j=i}W[i,j]\!*\!(E_{1_{\textrm{max}(i,j)}}\!\otimes\! E_{0_{\textrm{max}(i,j)}})\!\odot\!(A_1\!\otimes\! A_0))^{-1}x} \\
\medmath{= y^T(I\!-\!c(\sum^l_{i=1}\sum^l_{j=i}(E_{1_{\textrm{max}(i,j)}}\!\odot \!A_1)\!\otimes\!(\underbrace{W[i,j]\!*\!E_{0_{\textrm{max}(i,j)}}\!\odot\! A_0}_{\textbf{Z\textsubscript{ij} (invariant w.r.t. q)}})))^{-1}x} \\
\medmath{= y^T(I-c(\sum^l_{i=1}\sum^l_{j=i}(E_{1_{\textrm{max}(i,j)}}\odot A_1)\otimes Z_{ij}))^{-1}x}
\end{multline*}
Rewriting $A_1$ as $A_R + A_q$ and taking apart $E_{1_{\textrm{max}(i,j)}}$:

\begin{multline*}
\medmath{y^T(I-c(\sum^l_{i=1}\sum^l_{j=i}(\underbrace{E_{R_{\textrm{max}}(i,j)}\odot A_R}_{\textbf{Y\textsubscript{ij} (invariant w.r.t. q)}})\otimes Z_{ij})}\\
\medmath{-c(\sum^l_{i=1}\sum^l_{j=i}(\underbrace{\textrm{max}(L_R[:,i]L_q[:,j]^T, L_R[:,j]L_q[:,i]^T)\odot A_q}_{\textbf{B\textsubscript{ij} (depends on q)}})\otimes Z_{ij})}\\
\medmath{-c(\sum^l_{i=1}\sum^l_{j=i}(\underbrace{\textrm{max}(L_q[:,i]L_R[:,j]^T, L_q[:,j]L_R[:,i]^T)\odot A_q}_{\textbf{B\textsuperscript{T}\textsubscript{ij} (depends on q)}})\otimes Z_{ij})}\\
\medmath{-c(\sum^l_{i=1}\sum^l_{j=i}(\underbrace{\textrm{max}(L_R[:,i]L_q[:,j]^T, L_R[:,j]L_q[:,i]^T)\odot A_R}_{\textbf{0}})\otimes Z_{ij})}\\
\medmath{-c(\sum^l_{i=1}\sum^l_{j=i}(\underbrace{E_{q_{\textrm{max}}}\odot A_q}_{\textbf{0}})\otimes Z_{ij}) -\underbrace{......}_{\textbf{Other zero matrices}})^{-1}x}
\end{multline*}

Notice that each of the 3 terms inside the summation are matrix blocks:\begin{itemize}
    \item Each $Y_{ij}\otimes Z_{ij}$ is a $t(t-1)$ by $t(t-1)$ matrix block in the top left-hand corner invariant of the candidate $q$.
    \item Each $B_{ij}\otimes Z_{ij}$ is a $t(t-1)$ by $t$ matrix block in the top right-hand corner.
    \item Each $B^T_{ij}\otimes Z_{ij}$ is a $t$ by $t(t-1)$ matrix block in the bottom left-hand corner.
\end{itemize}

Therefore, the matrix for which we wish to compute the inverse can be rewritten as a matrix in block form:

\begin{equation*}
\medmath{\begin{pmatrix}
I-c(\sum^l_{i=1}\sum^l_{j=i}Y_{ij}\otimes Z_{ij}) & -c(\sum^l_{i=1}\sum^l_{j=i}B_{ij}\otimes Z_{ij})\\
-c(\sum^l_{i=1}\sum^l_{j=i}B^T_{ij}\otimes Z_{ij}) & I
\end{pmatrix}}
\end{equation*}

As the largest block in the top left-hand corner is invariant of the candidate we evaluate, we can now speed up the kernel computation for subsequent candidates using blockwise matrix inversion, and present the control flow in Algorithm 1:

\begin{algorithm}
\caption{FastKernel: Single Member Replacement \label{lm:fastkernel}}
\SetKwInOut{Input}{Input}
\SetKwInOut{Output}{Output}
\Input{(1) The social network $G := \{A, L\}$; Lg, (2)
team members $T$ , (3) remaining team $R$, (4) Skill relevance matrix $W$}
\Output{Most suitable person $q$ to insert into $R$}
Initialize $A_0$, $A_R$ = adjacency matrix of $T$, $R$;\\
Initialize $L_0$, $L_R$ = skill matrix of $T$, $R$;\\
Precompute $Z_{ij} = W[i,j]*E_{0_{\textrm{max}(i,j)}}\odot A_0$ for all suitable $i, j$;\\
Precompute $Y_{ij} = E_{R_{\textrm{max}(i,j)}}\odot A_R$ for all suitable $i, j$;\\
Precompute $K^{-1} = (I-c(\sum^l_{i=1}\sum^l_{j=i}Y_{ij}\otimes Z_{ij}))^{-1}$;\\
Initialize $H = I$ with dimension $t$ by $t$;\\
\For{each candidate $q'$ in G after pruning }{
Initialize $A_{q'}, L_{q'}$;\\
Compute $B_{ij} = \textrm{max}(L_R[:,i]L_{q'}[:,j]^T, L_R[:,j]L_{q'}[:,i]^T)\odot A_q$ for all suitable $i, j$;\\
Compute $F = -c(\sum^l_{i=1}\sum^l_{j=i}B_{ij}\otimes Z_{ij})$;\\
Compute lower-right hand block $S = (H - F^TK^{-1}F)^{-1}$;\\
Compute upper-left hand block $K^{-1} + K^{-1}FSF^TK^{-1}$;\\
Compute upper-right hand block $-K^{-1}FS$;\\
Compute lower-left hand block $-SF^TK^{-1}$;\\
Combine blocks in lines 11-14 into inverse matrix $M$, compute candidate score of $q'$ as $y^TMx$;
}
\textbf{return} the candidate $q$ with the highest score.
\end{algorithm}

Note that in the single-member replacement case $R = T \setminus\{p\}$. We do not impose this as a constraint as we will have other uses for this algorithm in the remainder of the paper.

\begin{lemma}
\textsc{Time complexity of FastKernel.} When $R = T \setminus\{p\}$, Algorithm 1 has a time complexity of:

\begin{equation}
O(t^6 + l^2t^4 + (\sum_{i \in T \setminus \{p\}} \textnormal{degree}(i))(t^5 + l^2t^3))
\end{equation}
\end{lemma}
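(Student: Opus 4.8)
The plan is to separate the running time into the one-time precomputation phase (lines 1--6), executed once regardless of how many candidates are evaluated, and the per-candidate loop (lines 7--15), whose body is repeated once for every candidate surviving the pruning step. The total cost is then the precomputation cost plus the number of surviving candidates times the per-iteration cost, and I expect each of the three terms in the claimed bound to come from a distinct source: $t^6$ and $l^2t^4$ from the precomputation, and the factor $(t^5 + l^2t^3)$ from a single loop iteration scaled by the candidate count.

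First I would bound the number of loop iterations. By the validity-of-pruning lemma we discard every candidate with no connection to the remaining team, so the candidates that reach the loop all lie in the neighborhood of $R = T \setminus \{p\}$. Since that neighborhood has size at most the sum of the degrees of its members, the iteration count is $O(\sum_{i \in T \setminus \{p\}} \textnormal{degree}(i))$, which is exactly the multiplier appearing in the bound. Next I would cost the precomputation: each $Z_{ij}$ and $Y_{ij}$ is an $O(t)\times O(t)$ matrix and there are $O(l^2)$ index pairs $(i,j)$, so forming them is cheap. The dominant terms come from building and inverting $K = I - c\sum_{i,j} Y_{ij} \otimes Z_{ij}$: this matrix has dimension $t(t-1) \times t(t-1)$, so each of the $O(l^2)$ Kronecker products costs $O((t(t-1))^2) = O(t^4)$ to form and accumulate, giving $O(l^2 t^4)$, while the single dense inversion producing $K^{-1}$ costs $O((t(t-1))^3) = O(t^6)$. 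This accounts for the $t^6 + l^2 t^4$ summand.

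The per-iteration analysis is where I expect the real work, and the main obstacle, to lie: I must show each iteration is $O(t^5 + l^2 t^3)$ rather than the naive $O(t^6)$ one would pay by inverting the full $t^2 \times t^2$ product-graph matrix from scratch. Assembling $F = -c\sum_{i,j} B_{ij} \otimes Z_{ij}$ produces an $O(t(t-1)) \times O(t)$, i.e. $O(t^3)$-size, output repeated over $O(l^2)$ pairs, giving $O(l^2 t^3)$. The blockwise inversion then consists only of matrix products against the already-available $K^{-1}$ (size $t(t-1) \times t(t-1)$) and the thin factor $F$ (size $t(t-1) \times t$): forming $K^{-1}F$ and $F^T K^{-1}$ costs $O((t(t-1))^2 t) = O(t^5)$, the Schur complement $H - F^T K^{-1} F$ and its $t \times t$ inverse $S$ cost $O(t^4 + t^3)$, and assembling the four blocks together with the final score $y^T M x$ are each $O(t^5)$ or cheaper. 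Hence a single iteration is $O(t^5 + l^2 t^3)$.

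Combining the precomputation with the candidate count times the per-iteration cost yields $O(t^6 + l^2 t^4 + (\sum_{i \in T \setminus \{p\}} \textnormal{degree}(i))(t^5 + l^2 t^3))$, as claimed. The crux of the argument is verifying that precomputing $K^{-1}$ once truly removes the cubic inversion from the loop, leaving only matrix multiplications against a fixed factor, so that the expensive $t^6$ term is incurred a single time rather than once per candidate; the remaining care is purely bookkeeping of the block dimensions through the Schur-complement formula to confirm that $O(t^5)$, and not a larger power, is the dominant per-iteration cost.
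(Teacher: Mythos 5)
Your proposal is correct and follows essentially the same approach as the paper's (much terser) proof: the $t^6 + l^2t^4$ term from forming and inverting the candidate-invariant block $K$ once, the $t^5 + l^2t^3$ term from assembling $F$ and the Schur-complement blocks per candidate, and the degree-sum multiplier from counting the candidates that survive pruning. Your write-up additionally makes explicit the bound on the number of loop iterations and the block dimensions, which the paper leaves implicit, but the argument is the same.
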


\begin{proof} 
The summation of Kronecker products in line 5 takes $O(l^2t^4)$ time and inverting $K$ takes $O(t^6)$ time. Inside the loop, computing $F$ on line 10 takes $O(l^2t^3)$ time, and computing Sherman-Morrison block form from lines 11 to 14 take $O(t^5)$ time.

\end{proof}

\subsection{REFORM - Simultaneous Multiple Member Replacement}

Armed with the single-member replacement algorithm, we now formulate our proposed solution for \textit{Subteam Replacement}. A straightforward way would be to initialize the remaining team $R = T \setminus S$ and iteratively add new team members to $R$ until $|R| = |T|$. As it is a well-known fact that greedy algorithms typically yield locally optimal solutions, we will need to ensure that a solution obtained greedily has a desirable lower bound on its quality. To do so, we will need to convert our graph kernel into a scoring function with some desirable properties for a greedy approach. We propose the following scoring function for a candidate subteam $S'$:

\begin{equation}
g(S') = \widehat{\textnormal{Ker}}(G_T, G_{R \cup S'}) - \widehat{\textnormal{Ker}}(G_T, G_R)
\end{equation}

Where $\widehat{\textnormal{Ker}}$ is defined as an approximate kernel measure:

\begin{equation}
\widehat{\textnormal{Ker}}(G_0, G_1) = \frac{\textnormal{sum}((I-c(E_{0x1}\odot(A_0\otimes A_1)))^{-1})}{|V_0|^4}
\end{equation}

Where the \textit{sum} operator returns the sum of all elements in the matrix. One can visualize the approximate kernel for cases where $|V_0| \geq |V_1|$ as appending `dummy nodes' with degree 0 and zero skill vectors to $G_1$: these dummy nodes are used for the sole purpose of making $G_1$'s size equal to $G_0$ before computing the graph kernel. In our case of using uniform starting and stopping vectors in the computation of the actual kernel value, the approximate kernel value is an underestimation: $\widehat{\textnormal{Ker}}(G_0, G_1) \leq \textnormal{Ker}(G_0, G_1) if |V_0| \geq |V_1|$.

\begin{lemma}
For any given team $T$ and 2 candidate teams with the same size $S', S'_2$, $\textnormal{Ker}(G_T, G_{S'}) \geq \textnormal{Ker}(G_T, G_{S'_2}) \Longleftrightarrow \widehat{\textnormal{Ker}}(G_T, G_{S'}) \geq \widehat{\textnormal{Ker}}(G_T, G_{S'_2}) \Longleftrightarrow g(S') \geq g(S'_2).$ 
\end{lemma}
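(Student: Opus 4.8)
The plan is to prove the two equivalences in the chain separately, exploiting the fact that every team under comparison shares the same vertex count $t$. The crucial preliminary observation is that the exact kernel and the approximate kernel are related by a purely size-dependent scalar. Writing $M = (I - c(E_{0x1} \odot (A_0 \otimes A_1)))^{-1}$ and recalling that $x = x_1 \otimes x_2$ and $y = y_1 \otimes y_2$ are uniform vectors, I would first record that $y^T M x = \frac{\textnormal{sum}(M)}{(|V_0||V_1|)^2}$, since the Kronecker product of uniform vectors is again uniform and $\mathbf{1}^T M \mathbf{1} = \textnormal{sum}(M)$. Comparing this with the definition $\widehat{\textnormal{Ker}}(G_0,G_1) = \textnormal{sum}(M)/|V_0|^4$ yields the identity $\widehat{\textnormal{Ker}}(G_0,G_1) = \frac{|V_1|^2}{|V_0|^2}\,\textnormal{Ker}(G_0,G_1)$, which incidentally recovers the paper's earlier claim that $\widehat{\textnormal{Ker}} \leq \textnormal{Ker}$ whenever $|V_0| \geq |V_1|$.

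For the first equivalence, I would apply this identity with $G_0 = G_T$ and $G_1$ equal to the reconstituted teams $G_{R\cup S'}$ and $G_{R\cup S'_2}$ (reading $G_{S'}$ as shorthand for $G_{R\cup S'}$, consistent with the definition of $g$). Because $|S'| = |S'_2|$ and $R$ is fixed, both reconstituted teams have exactly the size $t$ of $T$, so the proportionality factor $|V_1|^2/|V_0|^2$ equals $1$ and in fact $\widehat{\textnormal{Ker}} = \textnormal{Ker}$ on these arguments. Since the two kernels agree up to a common positive factor, $\textnormal{Ker}(G_T,G_{S'}) \geq \textnormal{Ker}(G_T,G_{S'_2})$ holds if and only if $\widehat{\textnormal{Ker}}(G_T,G_{S'}) \geq \widehat{\textnormal{Ker}}(G_T,G_{S'_2})$. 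Even under the weaker reading in which the right argument is the raw size-$s$ subgraph, the factor $|V_1|^2/t^2$ is still strictly positive and identical for the two equal-size candidates, so the ordering is preserved regardless.

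For the second equivalence, I would observe that in $g(S') = \widehat{\textnormal{Ker}}(G_T, G_{R\cup S'}) - \widehat{\textnormal{Ker}}(G_T, G_R)$ the subtracted term depends only on $T$ and $R$, both of which are held fixed across all candidates. Hence $g$ differs from $\widehat{\textnormal{Ker}}(G_T, G_{R\cup S'})$ by an additive constant independent of the candidate, giving $g(S') - g(S'_2) = \widehat{\textnormal{Ker}}(G_T,G_{R\cup S'}) - \widehat{\textnormal{Ker}}(G_T,G_{R\cup S'_2})$ and therefore $g(S') \geq g(S'_2)$ if and only if $\widehat{\textnormal{Ker}}(G_T,G_{R\cup S'}) \geq \widehat{\textnormal{Ker}}(G_T,G_{R\cup S'_2})$. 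Chaining this with the first equivalence closes the argument.

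The content here is bookkeeping rather than deep mathematics; the one place the argument could quietly break is the normalization, which is also what I expect to be the main obstacle to state cleanly. I would make sure the uniform-vector convention and the exponent $4$ in the denominator of $\widehat{\textnormal{Ker}}$ are precisely the values that force the size-dependent factor to equal $1$ (or at least remain constant) across equal-size candidates. These are exactly the constants that let a greedy procedure optimize the cheap-to-evaluate $g$ while provably optimizing the true kernel $\textnormal{Ker}$, which is the point of the lemma.
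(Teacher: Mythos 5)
Your argument is correct, and since the paper omits its own proof of this lemma there is nothing to compare it against; what you give is surely the intended argument. The two key points — that $\widehat{\textnormal{Ker}}(G_T,\cdot)$ differs from $\textnormal{Ker}(G_T,\cdot)$ only by a factor determined by the sizes of the two graphs (hence constant across equal-size candidates, whatever the exact normalization of the uniform vectors), and that $g$ differs from $\widehat{\textnormal{Ker}}(G_T, G_{R\cup \cdot})$ by the candidate-independent constant $\widehat{\textnormal{Ker}}(G_T,G_R)$ — are exactly what make the chain of equivalences hold, and your explicit handling of the ambiguity between $G_{S'}$ and $G_{R\cup S'}$ (resolving it in favor of the latter, which is the only reading under which the second equivalence is meaningful) is the right call.
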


\begin{proof}
Omitted for brevity.
\end{proof}

The above lemma implies that maximizing the actual kernel via \textsc{FastKernel} produces the same results as maximizing the approximate kernel and maximizing the scoring function. The purpose of the approximate kernel is to ensure that the scoring function defined with it holds the following desired properties: 

\begin{lemma}
$g(X)$ is normalized: $g(\emptyset) = 0$
\end{lemma}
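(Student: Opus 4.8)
The plan is to proceed by direct substitution, since the scoring function $g$ is defined as a difference of two approximate kernel values and the empty set acts as the identity for set union. First I would substitute $S' = \emptyset$ into the definition $g(S') = \widehat{\textnormal{Ker}}(G_T, G_{R \cup S'}) - \widehat{\textnormal{Ker}}(G_T, G_R)$, and observe that $R \cup \emptyset = R$, so the augmented team $G_{R \cup S'}$ collapses to exactly $G_R$. The first term therefore becomes $\widehat{\textnormal{Ker}}(G_T, G_R)$, which is syntactically identical to the subtracted second term, and the difference vanishes, yielding $g(\emptyset) = 0$.

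Before concluding I would briefly verify that the single remaining object $\widehat{\textnormal{Ker}}(G_T, G_R)$ is well-defined and that its two occurrences in $g$ are numerically equal rather than merely approximately so. The approximate kernel is evaluated by padding the smaller graph with degree-zero dummy nodes so that its vertex count matches $|V_T|$ and then normalizing by $|V_T|^4$; since $R = T \setminus S \subseteq T$ we have $|V_R| = t - s \leq t = |V_T|$, so this padding construction is legitimately invoked. Because both terms reference the very same ordered pair of graphs $(G_T, G_R)$ with the same padding and the same normalization factor, they evaluate to one and the same real number, so the cancellation is exact.

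I expect no substantive obstacle: the statement is essentially a definitional consequence of the telescoping form of $g$ together with the identity $R \cup \emptyset = R$. The only point meriting a sentence of care is confirming that the dummy-node padding inside $\widehat{\textnormal{Ker}}$ is applied consistently to both terms, so that what cancels is a genuine equality and not two independently computed approximations; once that is noted, normalization of $g$ follows immediately.
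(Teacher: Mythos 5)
Your proof is correct: substituting $S' = \emptyset$ gives $R \cup \emptyset = R$, so the two terms in $g$ are literally the same quantity and cancel exactly, which is precisely the trivial definitional argument the paper omits for brevity. Your extra care about the dummy-node padding being applied identically to both occurrences of $\widehat{\textnormal{Ker}}(G_T, G_R)$ is a reasonable sanity check but adds nothing beyond what the definition already guarantees.
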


\begin{proof}
Omitted for brevity.
\end{proof}

\begin{lemma}
$g(X)$ is non-decreasing: for every $X \subset V \setminus T$, $x \in V\setminus T$, $g(X\cup \{x\}) \geq g(X)$.
\end{lemma}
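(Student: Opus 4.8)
The plan is to peel off the part of $g$ that does not depend on $X$ and reduce the claim to a single monotonicity statement about the (padded) kernel matrix. Since $g(X\cup\{x\}) - g(X) = \widehat{\textnormal{Ker}}(G_T, G_{R\cup X\cup\{x\}}) - \widehat{\textnormal{Ker}}(G_T, G_{R\cup X})$ (the $-\widehat{\textnormal{Ker}}(G_T,G_R)$ terms cancel), and since the denominator in every $\widehat{\textnormal{Ker}}(G_T,\cdot)$ appearing here is the constant $|V_T|^4 = t^4$ (the first argument is always $G_T$), it suffices to prove the numerator inequality $\textnormal{sum}((I - G_{T\times(R\cup X\cup\{x\})})^{-1}) \geq \textnormal{sum}((I - G_{T\times(R\cup X)})^{-1})$, where both product graphs are padded with degree-$0$ dummy nodes up to size $t$ as in the definition of $\widehat{\textnormal{Ker}}$, so that the two matrices share the same dimension $t^2\times t^2$.

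Next I would make the dummy-node comparison explicit. Adding $x$ amounts to taking the padded representation of $R\cup X$ and turning the dummy slot reserved for $x$ into a genuine node carrying $x$'s edges and skill vector. Because all edge weights and all skill proficiencies are non-negative, this operation only replaces zeros by non-negative numbers, so the padded matrices satisfy $A_{R\cup X\cup\{x\}} \geq A_{R\cup X}$ and $L_{R\cup X\cup\{x\}} \geq L_{R\cup X}$ entrywise. (If $x$ happens to have no connection to $R\cup X$ the two are equal, which is consistent with the weak inequality we are proving.)

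I would then push this entrywise ordering through the kernel matrix exactly as in the pruning lemma. Monotonicity of $E_{\textrm{max}(i,j)} = \textrm{max}(L[:,i]L[:,j]^T, L[:,j]L[:,i]^T)$ in $L$ (entrywise, valid because $L \geq 0$) together with $W \geq 0$ gives $E_{T\times(R\cup X\cup\{x\})} \geq E_{T\times(R\cup X)} \geq 0$; combined with $A_T\otimes A_{R\cup X\cup\{x\}} \geq A_T\otimes A_{R\cup X}$ and $c>0$, the Hadamard product yields $G_{T\times(R\cup X\cup\{x\})} \geq G_{T\times(R\cup X)} \geq 0$ entrywise. Reusing verbatim the induction from the proof of Lemma 1, $(G_{T\times(R\cup X\cup\{x\})})^k \geq (G_{T\times(R\cup X)})^k \geq 0$ for every $k$, so via the Neumann expansion $(I-G)^{-1} = \sum_{k=0}^{\infty} G^k$ the elementwise sums compare term by term. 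Summing and dividing by the common $t^4$ gives the numerator inequality, hence $g(X\cup\{x\}) \geq g(X)$.

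The main obstacle is not analytic but bookkeeping: I must set up the dummy-node padding so that ``inserting $x$'' is literally ``overwriting a zero row/column and a zero skill row by non-negative ones,'' keeping both matrices at the common size $t^2\times t^2$ so they are comparable entrywise, and I should note this stays valid throughout the regime $|R\cup X\cup\{x\}| \leq t$ in which $g$ is evaluated during the greedy construction. Everything else --- the non-negativity of every factor and the entrywise power monotonicity $M \geq N \geq 0 \Rightarrow M^k \geq N^k \geq 0$ --- is already established inside Lemma 1, and convergence of both Neumann series is guaranteed by the same choice of $c$ that makes $\textnormal{Ker}$ well-defined.
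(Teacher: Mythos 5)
Your proof is correct and takes essentially the same approach as the paper: the paper's own proof is only a one-line sketch stating that replacing a degree-$0$, zero-skill dummy node with an actual person has a non-negative impact on the kernel value, which is precisely your padding argument. Your additional work --- cancelling the $\widehat{\textnormal{Ker}}(G_T,G_R)$ terms, noting the common denominator $t^4$, and pushing the entrywise ordering through the Neumann series via the machinery of the pruning lemma --- is the natural formalization of that sketch.
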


\textsc{Sketch of Proof.} Replacing a dummy node with and no connections and a zero skill vector with an actual person from the social network will have a non-negative impact on the kernel value. \QEDB

\begin{lemma}\label{lm:supermodular}
$g(X)$ is supermodular: for every $A \subset B \subset V \setminus T$, $x \in \setminus T$, $g(B\cup\{x\}) - g(B) \geq g(A\cup\{x\}) - g(A)$.
\end{lemma}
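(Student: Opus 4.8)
The plan is to reduce supermodularity of $g$ to a statement about weighted walk counts in the product graph, and then exploit the monotone set-inclusion structure of the padded construction. First I would observe that in any marginal difference the constant term cancels: since $g(S') = \widehat{\textnormal{Ker}}(G_T, G_{R\cup S'}) - \widehat{\textnormal{Ker}}(G_T, G_R)$, we have $g(X\cup\{x\}) - g(X) = \widehat{\textnormal{Ker}}(G_T, G_{R\cup X\cup\{x\}}) - \widehat{\textnormal{Ker}}(G_T, G_{R\cup X})$, and because $\widehat{\textnormal{Ker}}$ only rescales $\textnormal{sum}((I-\cdots)^{-1})$ by the positive constant $1/|V_0|^4 = 1/t^4$, it suffices to prove the increasing-differences inequality for the unnormalized quantity $f(X) := \textnormal{sum}\big((I - G_{T\times (R\cup X)^{\mathrm{pad}}})^{-1}\big)$, where $(R\cup X)^{\mathrm{pad}}$ denotes $R\cup X$ padded with degree-zero, zero-skill dummy nodes up to size $t$ as in the definition of $\widehat{\textnormal{Ker}}$.

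Next I would pass to the Taylor expansion already used in the pruning lemma, namely $f(X) = \sum_{k=0}^\infty \mathbf{1}^T G_{T\times (R\cup X)^{\mathrm{pad}}}^k \mathbf{1}$, so that $f(X)$ equals the total $c$-discounted weight of all walks in the product graph, summed over all start and end vertices. The crucial structural fact is that a dummy position in the second factor has zero adjacency and zero skill entries, so $G_{T\times(R\cup X)^{\mathrm{pad}}}$ has a positive $(a,b)\!\to\!(a',b')$ entry only when both second-factor positions $b,b'$ lie in the real set $R\cup X$. Hence every walk of positive weight keeps its second coordinate inside $R\cup X$, and $f(X)$ is exactly the total discounted weight of product-graph walks whose second-factor positions are confined to $R\cup X$.

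With this interpretation the marginal gain becomes transparent. Enlarging $X$ to $X\cup\{x\}$ enlarges the admissible second-factor positions from $R\cup X$ to $R\cup X\cup\{x\}$, so $f(X\cup\{x\}) - f(X)$ is precisely the total (nonnegative) weight of those walks confined to $R\cup X\cup\{x\}$ that use position $x$ at least once. Now take $A\subset B$ and $x\in V\setminus(T\cup B)$. Any positive-weight walk counted in $f(A\cup\{x\}) - f(A)$ stays within $R\cup A\cup\{x\}\subseteq R\cup B\cup\{x\}$ and uses $x$, hence it is also counted in $f(B\cup\{x\}) - f(B)$; since all edge weights, and therefore all walk weights, are nonnegative, summing over the larger collection can only be larger. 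This yields $f(B\cup\{x\}) - f(B)\ge f(A\cup\{x\}) - f(A)$, and after restoring the positive factor $1/t^4$ and the cancelled constant it gives exactly $g(B\cup\{x\}) - g(B)\ge g(A\cup\{x\}) - g(A)$.

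The main obstacle I anticipate is making the padding bookkeeping rigorous rather than merely pictorial: one must verify that appending dummy nodes truly contributes only zero rows and columns to $E_{T\times\cdot}$, to $A_T\otimes A_\cdot$, and hence to every power $G_{T\times\cdot}^k$, so that the walk-confinement claim and the clean partition of walks into ``visits $x$'' versus ``does not visit $x$'' are valid at every order $k$. A secondary technical point is justifying convergence of the Taylor series under the same spectral-radius assumption on $c$ implicit in equation (1) and the pruning lemma, which is what legitimizes the termwise walk argument and the interchange of the infinite sum with the set-difference.
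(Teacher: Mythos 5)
Your proof is correct, but it takes a genuinely different route from the paper's. The paper reduces supermodularity to the two-point increasing-differences inequality $g(S'\cup\{x,y\})-g(S'\cup\{x\})\geq g(S'\cup\{y\})-g(S')$, constructs four size-matched graphs $G_3,\dots,G_6$ by padding with an isolated dummy node $o$, and then proves the element-wise matrix inequality $G_{0\times3}^k-G_{0\times4}^k\geq G_{0\times5}^k-G_{0\times6}^k$ for every power $k$ by induction: the base case localizes the entire difference to the two product-matrix entries encoding the $x$--$y$ adjacency scaled by their skill interaction, and the inductive step manipulates non-commuting matrix products. You instead stay at the level of scalars: you expand $\mathrm{sum}((I-G)^{-1})$ as a discounted sum of walk weights in the product graph, identify each marginal gain $f(X\cup\{x\})-f(X)$ with the total weight of walks that visit $x$, and conclude by set inclusion of walk families plus nonnegativity. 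The two arguments share the same underlying engine (Taylor expansion, nonnegativity of all entries, and the observation that the marginal value comes from interactions involving the newly added member), but yours buys three things: it proves the general $A\subset B$ statement directly rather than via the two-point reduction; it handles all powers $k$ at once, sidestepping the paper's inductive step entirely --- which is a real advantage, since the paper's printed factorization $G_{0\times3}^k-G_{0\times4}^k=(G_{0\times3}^{k-1}-G_{0\times4}^{k-1})(G_{0\times3}-G_{0\times4})+G_{0\times3}^{k-1}G_{0\times4}+G_{0\times3}G_{0\times4}^{k-1}$ does not hold as an identity for non-commuting matrices, so your argument is arguably on firmer ground; and it makes the source of supermodularity intuitively visible (more teammates means more walks through $x$). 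What the paper's approach buys is uniformity with its other lemmas (the pruning lemma uses the same element-wise matrix-inequality machinery) and an explicit picture of exactly which entries differ (Fig.~13). One small bookkeeping point in your write-up: dummy nodes contribute zero rows and columns to $G^k$ only for $k\geq 1$; the $k=0$ term $\mathrm{sum}(I)=t^2$ does include dummy positions, but since it is a constant it cancels in every marginal difference, so your partition of walks into ``visits $x$'' versus ``confined to $R\cup X$'' should simply be stated for walks of length at least one.
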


\begin{proof}
See Appendix.
\end{proof}

As our scoring function is normalized, non-decreasing and supermodular, it is a supermodular set function. We will now define a prototype greedy algorithm as in Algorithm 2:

\begin{algorithm}
\caption{GreedyMax}
\SetKwInOut{Input}{Input}
\SetKwInOut{Output}{Output}
\Input{(1) The social network $G := \{A, L\}$; Lg, (2)
team members $T$ , (3) subteam to replace $S$, (4) Skill relevance matrix $W$}
\Output{Most suitable subteam $S'$ to insert into $R = T \setminus S$}
Initialize $G_T$, $S' = \emptyset$, $C = V \setminus T$;\\
\While{$|S'| < |S|$}{
$v \leftarrow \textrm{max}_{v \in C} \quad g(S' \cup \{v\}) - g(S')$\;
$S' \leftarrow S' \cup \{v\}$\;
$C \leftarrow C \setminus \{v\}$\;
}
\textbf{return} $S'$.
\end{algorithm}

Let $S^+$ be the optimal solution, and $S'$ be the solution obtained via \textsc{GreedyMax}; as $g$ is a supermodular set function, we can bound the score of $S'$ in terms of the score of $S^+$:

\begin{equation}
g(S') \geq (1 - \kappa^g)g(S^+)
\end{equation}
\noindent where $\kappa^g$ is the `supermodular curvature' first introduced in \cite{SupermodularCurvature} as a dual to the submodular curvature. The supermodular curvature $\kappa^g$ is defined as follows:

\begin{equation}
\kappa^g = 1 - \textrm{min}_{v \in C} \frac{g(\{v\})}{g(C) - g(C \setminus \{v\})}
\end{equation}
\noindent where $C \subset V \setminus T$ is the set of candidates, namely the set of individuals that we can consider adding into $S'$.

Putting everything together, we now present \textsc{REFORM}, our greedy approximation algorithm which utilizes \textsc{FastKernel} and our graph kernel:

\begin{algorithm}
\caption{REFORM: Multiple member (Subteam) Replacement}
\SetKwInOut{Input}{Input}
\SetKwInOut{Output}{Output}
\Input{(1) The social network $G := \{A, L\}$; Lg, (2)
team members $T$ , (3) subteam to replace $S$, (4) Skill relevance matrix $W$}
\Output{Most suitable subteam $S'$ to insert into $R = T \setminus S$}
Initialize $S' = \emptyset$, $R = T \setminus S$;\\
\While{$|S'| < |S|$}{
$v \leftarrow $\textsc{FastKernel}$(G, T, R, W)$;\\
$S' \leftarrow S' \cup \{v\}$;\\
$R \leftarrow R \cup \{v\}$\;
}
\textbf{return} $S'$.

\end{algorithm}

\begin{lemma}

\textsc{Time complexity of REFORM.} Algorithm 3 has a time complexity of:
\begin{equation}
O(s(t^6 + l^2t^4 + (\sum_{i \in (R \cup S')} \textnormal{degree}(i))(t^5 + l^2t^3)))
\end{equation}
\end{lemma}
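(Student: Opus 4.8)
The plan is to treat Algorithm~3 as an outer loop wrapped around repeated invocations of \textsc{FastKernel}, so that the total running time is the loop count multiplied by the per-iteration cost supplied by Lemma~2, plus negligible bookkeeping. First I would pin down the number of iterations: the while loop maintains the invariant that $|S'|$ grows by exactly one per pass (the update $S' \leftarrow S' \cup \{v\}$), starting from $|S'| = 0$ and halting when $|S'| = |S| = s$. Hence the body, whose only nontrivial work is the single call $v \leftarrow \textsc{FastKernel}(G, T, R, W)$ on line~3, executes exactly $s$ times; the two set updates on lines~4--5 cost $O(t)$ and are dominated by everything else.

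Next I would charge each \textsc{FastKernel} call using Lemma~2. The one point requiring care is that Lemma~2 is stated for the single-member case $R = T \setminus \{p\}$, where $|R| = t-1$, whereas inside \textsc{REFORM} the remaining team $R$ grows from $|R| = t-s$ up to $|R| = t-1$. I would argue that the bound of Lemma~2 still applies verbatim because every kernel evaluated by \textsc{FastKernel} is taken against the fixed input subgraph $G_T$ of size $t$: through the approximate kernel of equation~(8), any candidate team of size $|R|+1 \le t$ is padded with degree-zero dummy nodes up to size $t$ before the kernel is formed. Consequently the product-graph matrix whose inverse is sought is always $O(t^2) \times O(t^2)$, so the precomputation on line~5 of \textsc{FastKernel} stays $O(t^6 + l^2 t^4)$ and each candidate inside its loop still costs $O(t^5 + l^2 t^3)$, independent of the current $|R|$.

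It then remains to bound the number of candidates examined per call uniformly across the $s$ iterations. After pruning, \textsc{FastKernel} only iterates over individuals connected to at least one member of the current remaining team, so at an iteration with remaining team $R_k$ the candidate count is at most $\sum_{i \in R_k} \textnormal{degree}(i)$. Since $R$ is monotonically enlarged and every $R_k$ is contained in the final set $R \cup S'$ (of size $t$), this quantity is bounded for all $k$ by $\sum_{i \in (R \cup S')} \textnormal{degree}(i)$. Multiplying the per-call cost $O(t^6 + l^2 t^4 + (\sum_{i \in (R \cup S')} \textnormal{degree}(i))(t^5 + l^2 t^3))$ by the $s$ iterations and absorbing the $O(t)$ bookkeeping yields the bound claimed in equation~(11).

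The hard part will be the second step: justifying rigorously that the dummy-node padding of equation~(8) leaves the matrix dimensions (and hence the $t^6$, $t^5$, $l^2 t^4$, $l^2 t^3$ factors) unchanged as $R$ is small early in the run, rather than letting them scale with the true intermediate team size $|R|+1$. If instead one bookkeeps each iteration with its actual team size $t - s + k$, the per-iteration costs form a sum $\sum_{k=1}^{s}(\cdot)$ that one bounds above by its $|R|+1 = t$ term; either route collapses to the stated $s \times (\text{Lemma~2})$ form, so the estimate is robust, and the only genuine subtlety is stating the dominating regime cleanly.
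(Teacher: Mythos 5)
Your proof is correct and is essentially the argument the paper intends (the paper itself omits the proof for brevity): the bound is exactly $s$ invocations of \textsc{FastKernel}, each charged via Lemma~2, with every intermediate remaining team's candidate set (and hence degree sum) dominated by that of the final team $R \cup S'$, and all matrix dimensions bounded by the size-$t$ case. The one subtlety you flag --- that the intermediate team sizes $t-s+k \le t$ only make each iteration cheaper than the Lemma~2 bound, whether one argues via dummy-node padding or by summing and bounding termwise --- is handled adequately.
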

\begin{proof}
Omitted for brevity.
\end{proof}

\section{Experimental Evaluation}
\begin{table*}[ht]
\centering
\begin{tabular}{p{0.3\linewidth} p{0.3\linewidth} p{0.3\linewidth}}
    \includegraphics[width=\linewidth]{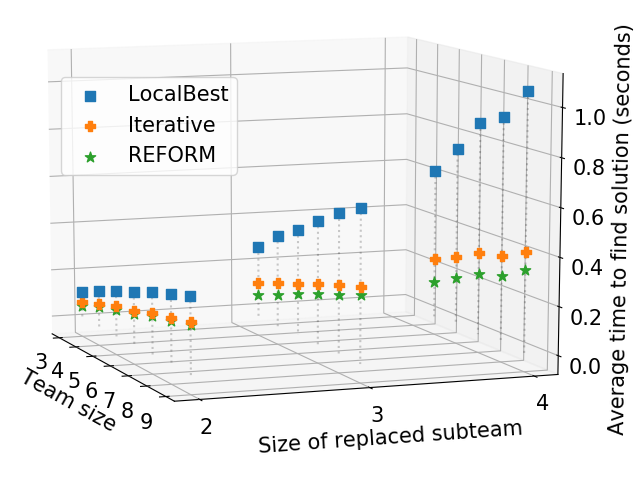}    &  \hspace{-0.3cm} \includegraphics[width=\linewidth]{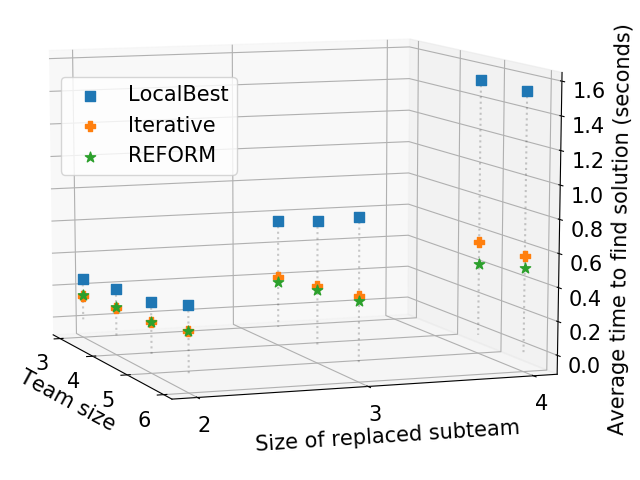}      \hspace{-0.3cm}        &   \includegraphics[width=\linewidth]{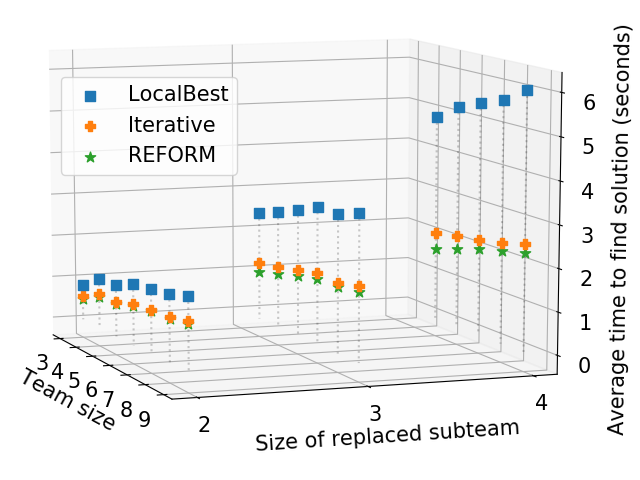}             \\
    \textbf{Fig. 2.} Average time for the algorithms to find the optimal solution vs. $t$ and $s$ on Synthetic BA graphs.
        & \textbf{Fig. 3.} Average time for the algorithms to find the optimal solution vs. $t$ and $s$ on DBLP subgraphs.
        & \textbf{Fig. 4.} Average time for the algorithms to find the optimal solution vs. $t$ and $s$ on IMDB subgraphs.\\
    \includegraphics[width=\linewidth]{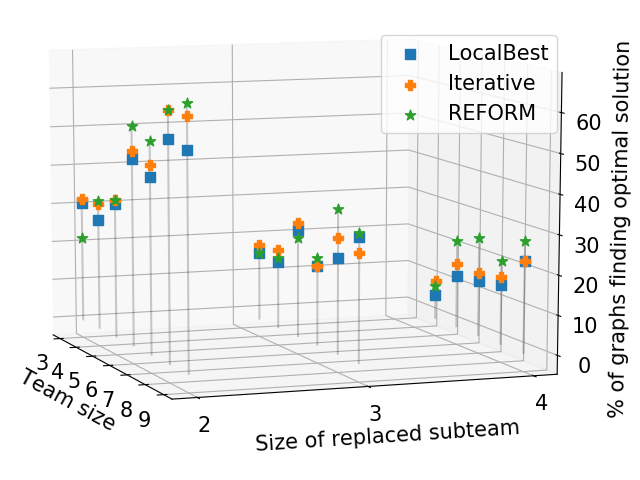}
                    &    \hspace{-0.3cm}   \includegraphics[width=\linewidth]{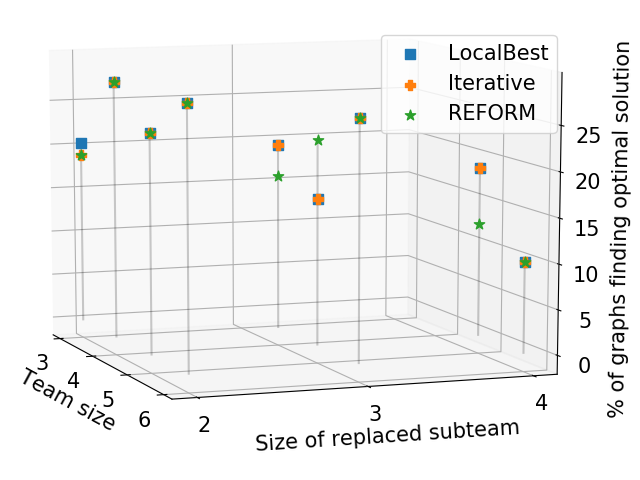}     \hspace{-0.3cm}    &         \includegraphics[width=\linewidth]{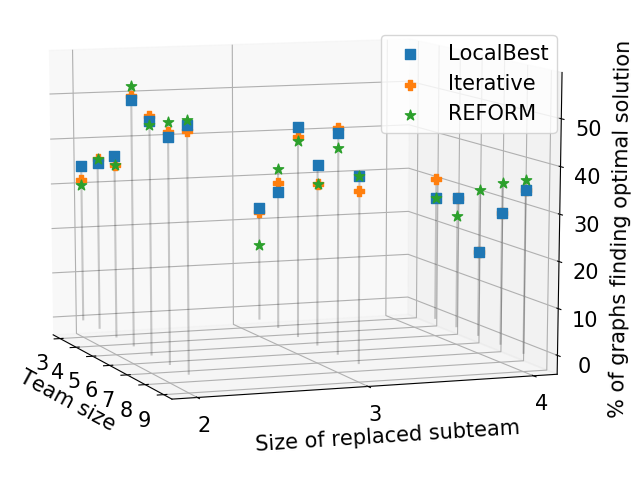}   \\
    \textbf{Fig. 5.} Percentage of graphs where the algorithms find the optimal solution vs. $t$ and $s$ on Synthetic BA graphs.
        & \textbf{Fig. 6.} Percentage of graphs where the algorithms find the optimal solution vs. $t$ and $s$ on DBLP subgraphs.
        & \textbf{Fig. 7.} Percentage of graphs where the algorithms find the optimal solution vs. $t$ and $s$ on IMDB subgraphs.\\
\end{tabular}
\end{table*}

In this section, we present the experimental results of our method and discuss their implications. We design our experiments according to the following questions which we wish to investigate: \begin{itemize}
    \item \textit{How viable is our proposed algorithm?}: We test our greedy algorithm against other baseline algorithms and evaluate its effectiveness in terms of both speed and solution quality.
    \item \textit{How effective are our metrics and recommendations?} We conduct a user study by running our solution along with other previous proposed solutions on teams in real-world datasets and use user-assigned rankings to assess the quality of the resulting replacement subteams.
\end{itemize}
\subsection{Datasets}
\textit{DBLP.} The DBLP dataset\footnote{http://dblp.uni-trier.de/xml/} contains information on papers published in different major computer science venues. We build the graph by creating a node for each author and edges between 2 authors with the weight as the number of papers they co-authored. We only consider the \textit{Inproceedings} file. The constructed network has $n = 989,686
$ nodes and  $m = 3,879,508
$ edges. For the skill matrix, we have $l= 19$ skills, one for each top venue in major Computer Science fields (e.g. WWW, KDD, NeuIPS, CVPR, AAAI), and the skill vector of an author is set to the number of papers they publish in each selected conference.

\textit{IMDB.} The IMDB dataset\footnote{http://grouplens.org/datasets/hetrec-2011/} contains information of movies and actors. We build the graph by creating a node for each actor and edges between 2 actors/actresses with the weight as the number of movies they co-starred. We only consider actors/actresses and movies from the U.S.. The constructed network has $n = 68,819$ nodes and $m = 2,911,749$ edges. For the skill matrix, we have $l = 20$ skills, one for each listed genre, and the skill vector is calculated via an exponential decay scoring system to address that the cast list tends to list actors/actresses in order of importance: Actors/actresses receive $0.95^{k-1}$ points in the genres of the movie they star in when their name is listed the $k^\textrm{th}$ on the cast list for that movie.

All experiments were run on a Linux machine with 8GB memory and an Intel i5-6200U CPU. The code will be released in the author's Github repository upon the publication of the paper.

\subsection{Quantitative evaluation}
We first demonstrate the viability of our greedy algorithm by gathering certain aggregate metrics from running the algorithm on batches of similarly-constructed graphs. Each batch consists of $100$ graphs, and within each batch, the team size $t$ and replaced subteam size $s$ is kept constant to investigate the relationship between our algorithm's performance and these properties. The skill relevance matrix $W$ is set as an upper triangular matrix of ones to weigh all skill pairing equally. We generate our batches of graphs via methods described below:

\textit{Synthetic BA graphs.} The Barabasi-Albert (BA) model is suitable as its preferential attachment property well mimics the structure of real social networks. We set the size of each graph to be fixed as $n = 50$, and the connectivity parameter to be $3$. Each node is labeled using $l = 6$ skills; we randomly sample the skill levels of each node and edge weights from an exponential distribution with $\lambda = 1$.

\textit{DBLP subgraphs.} We create DBLP subgraphs by specifying a certain year and 2 conferences, then taking the subgraph containing all authors with at least 1 publication in each of these 2 conferences in the specified year. This gives us subgraphs of interdisciplinary experts who are updated with the latest developments in their fields.

\textit{IMDB subgraphs.} We create IMDB subgraphs by specifying a certain year range and 2 genres, then taking the subgraph containing all actors/actresses starring in at least 1 movie in each of these 2 genres in the specified year range. Similarly to the DBLP subgraphs, this gives us subgraphs of multi-faceted actors/actresses who have recently shown their skill in performing in the specified genres.

In each synthetic BA graph, we randomly sample a connected subgraph to be the input team $T$, while we take a clique of authors/actors/actresses who have co-published a paper/co-starred in a movie to be the input team $T$ in the DBLP and IMDB subgraphs. In all teams, we select the subteam to replace $S$ randomly.

\begin{table}[ht]
\caption*{Table 3: Summary of properties of graphs used in batches for qualitative experimentation}
\begin{tabular}{llll}
\toprule
Dataset  & Graph size & Density & range of $t$ and $s$\\ \midrule
BA    &  50-50             &  0.115-0.115 &  $2\leq s \leq 4$, $s+1 \leq t \leq 9$        \\
DBLP &  30-100         &   0.016-0.106  &  $2\leq s \leq 4$, $s+1 \leq t \leq 6$\footnotemark[4]     \\
IMDB &  50-70         &   0.100-0.278\footnotemark[5]   &   $2\leq s \leq 4$, $s+1 \leq t \leq 9$   \\ \bottomrule
\end{tabular}
\textrm{\footnotemark[4]Papers coauthored by 7 or more researchers are too rare to form batches of sufficient size for experimentation.\\}
\textrm{\footnotemark[5]We avoid using subgraphs too dense as high density indicates a lack of diversity due to the subgraph containing actors/actresses from only a few distinct movies.}
\end{table}

To provide a comparison, we present the following alternative choices:\begin{itemize}
    \item \textsc{Iterative}: At each step, randomly select a person in $S$ that has yet to be replaced and replace them via \textsc{FastKernel} (Algorithm~\ref{lm:fastkernel}). This is repeated until all people in $S$ have been replaced.
    \item \textsc{LocalBest}: At each step, evaluate each person in $S$ that has yet to be replaced via \textsc{FastKernel}. The potential replacement that causes the largest increase (or least decrease) in kernel value is performed. This is repeated until all people in $S$ have been replaced.
\end{itemize}
We also employ a \textsc{Brute Force} algorithm to find the optimal solution $S^+$ by iterating through all possible subteams, which we use to compute the theoretical lower bound. We do not consider this algorithm as a baseline as it is computationally unfeasible in non-trivial problem settings.

\textit{Time to obtain solution.} Figures 2, 3, and 4 show that the proposed \textsc{REFORM} outperforms the baselines for any given combination of $t$ and $s$ on each dataset. Notably, all algorithms exhibit linear scaling with outgoing edges from $R$ which is roughly modeled by the increase in $t$, while \textsc{REFORM} and \textsc{Iterative} scales linearly with $s$ compared to \textsc{LocalBest}'s super-linear scaling with $s$. Comparing the results from the BA graphs and the DBLP subgraphs show that \textsc{REFORM} scales with the number of nodes in the graph, and comparing the results from the DBLP and IMDB subgraphs show that \textsc{REFORM} scales with the density of the graph.

\textit{Ability to find the optimal solution.} Figures 5, 6, and 7 showcase the computational difficulty of finding the optimal solution for \textit{Subteam Replacement}. No algorithm can find the optimal solution consistently, and all 3 algorithms have roughly equal capability in finding the optimal solution. However, a problem setting where less of the team is replaced (i.e. small$\frac{s}{t}$) generally translates to a higher probability for the algorithms to find the optimal solution.

\begin{figure}[htbp]
\centerline{\includegraphics[width=0.8\linewidth]{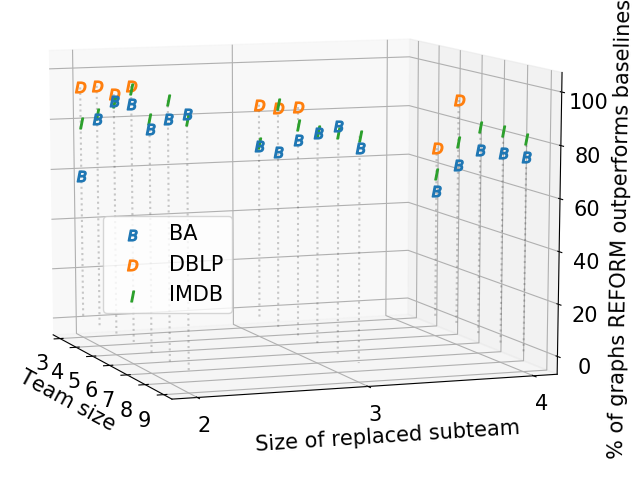}}
\captionsetup{labelformat=empty}
\caption{\textbf{Fig. 8.} Percentage of graphs where REFORM outperforms baselines vs. $t$ and $s$ on the three datasets.}
\end{figure}

\textit{Ability to outperform baselines.} Combining the information in Figures 2-7, we define that the proposed \textsc{REFORM} outperforms the baselines on a graph if it (1) finds a better solution or (2) finds the same solution but does it faster. It is evident in Fig. 8 that \textsc{REFORM} consistently outperforms the baselines in all the datasets and problem settings we experimented with, with this proportion being higher in problem settings where less of the team is replaced (small$\frac{s}{t}$). Interestingly, \textsc{REFORM} performs even better on the real datasets compared to the BA graphs.

\subsection{User studies}
We conduct user studies to evaluate our solution against solutions formed using established team performance metrics (TeamRep\cite{TeamReplacement}, GenDeR\cite{gender}) and 2 ablated versions (skill only, connection only) of our solution. 10 movies of various genre and year range were selected from the IMDB dataset, with the 4-6 person lead cast as the team and the 2-3 starring actors/actresses as the replaced subteam. The team, replaced subteam, and replacement subteam from the solutions are presented anonymously to the users in our visualization software (Fig. 9) with each actor/actresses represented by number of movies they participate in each genre. The order of presented solutions is also shuffled for each movie.

\begin{figure}[htbp]
\centerline{\includegraphics[width=\linewidth]{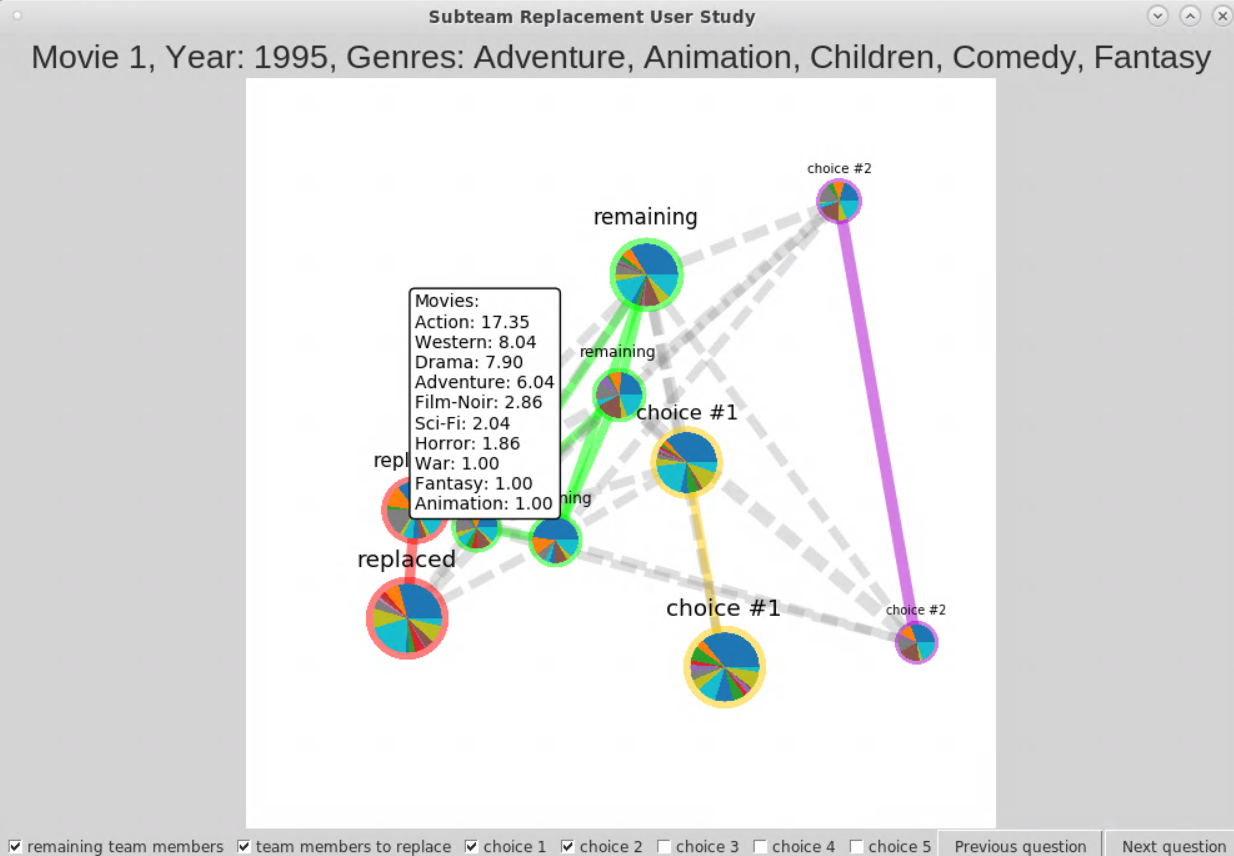}}
\captionsetup{labelformat=empty}
\caption{\textbf{Fig. 9.} Visualization software used in user study. Actors/actresses are anonymized and represented by their starred movies.}
\end{figure}

A total of 21 users participated in the study. For each question, users are asked to (1) select the first and second choice replacement subteams for each movie and (2) assign an ordinal confidence score (low, medium, high) which scales the weight of their answer for the particular question. The results are presented in Fig 10-12: 

\textit{Total score.} A method chosen as first and second is assigned 1 and 0.5 points respectively, scaled by the confidence score (high = 1, medium = 0.5, low = 0), and the total score is tallied from the choices of all users across all 10 questions. Fig. 11 shows that our method outperforms both established methods with statistical significance; it is outperformed by the `connection only' ablated method due to the relative ease of distinguishing connectivity differences compared to node attribute differences in the visualization.

\textit{NDCG (Normalized Discounted cumulative gain).} For each question, the NDCG for each method is computed from (1) the points awarded to each replacement subteams as relevance scores, and (2) the rank assigned to replacement subteams by the metrics employed in each method. Fig. 12 shows that our metric outperforms all baseline metrics except GenDeR with statistical significance; however, GenDeR is difficult to optimize for evident in its poor performance in terms of recommending replacement subteams.

\begin{table*}[ht]
\centering
\begin{tabular}{p{0.3\linewidth} p{0.3\linewidth} p{0.3\linewidth}}
    \includegraphics[width=\linewidth]{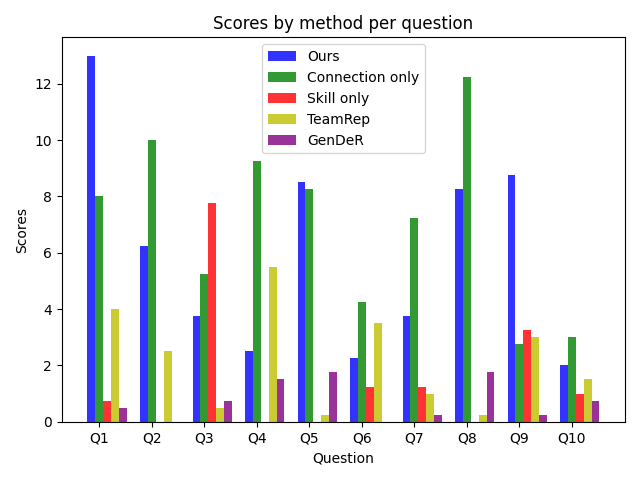}    &   \includegraphics[width=\linewidth]{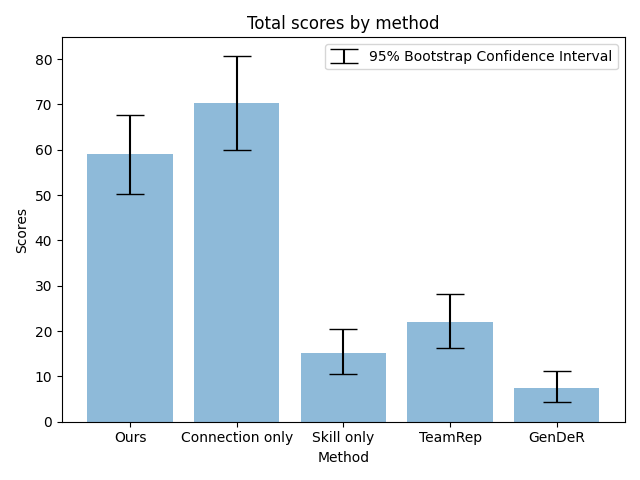} &  \includegraphics[width=\linewidth]{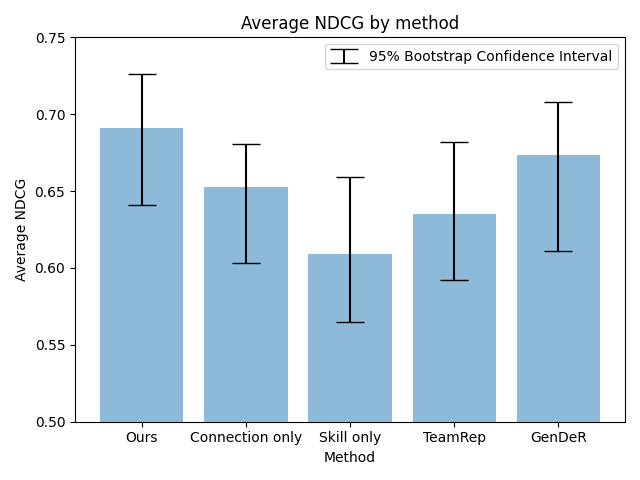} \\
    \textbf{Fig. 10.} Score per question for user study.
        & \textbf{Fig. 11.} Total scores for user study. Higher represents better replacement subteam. & \textbf{Fig. 12.} Average NDCG for user study. Higher NDCG represents more optimal ranking.\\
\end{tabular}
\end{table*}

\section{Related work}

\textbf{Team Formation.} Team formation is a problem concerning the assembly of individuals covering the skills required for given tasks under constraints such as team size or budget. Early works have modeled Team formation as a matching problem focusing on the skills present in the team\cite{zzkarian1999forming}\cite{baykasoglu2007project}. More recent works have taken the connectivity of team members into consideration and have used both existing \cite{cocktailparty}\cite{yin2018social} and constructed \cite{nearoptimalexpertteam} social networks as the problem setting, aiming to minimize communication cost alongside fulfilling the skill cover\cite{TeamFormation2009}, and have extended the problem to forming a sequence of teams given a sequence of skill covers \cite{OnlineTeamFormation}\cite{communitysystem}. A closely related problem is \textsc{Team replacement}, proposed in\cite{TeamReplacement} aiming to find a suitable replacement for a single unavailable member in a predefined team. Our work proposes an extension of \textsc{Team replacement} in the other direction of replacing multiple members at once while addressing several shortcomings in \cite{TeamReplacement}. 

\textbf{Team Performance Metrics.} 
Factors behind the success of teams have been identified and studied in social science literature, such as the familiarity between team members\cite{hinds2000choosing}\cite{familiarTeam} and the utilization of atypical combinations of skills \cite{AtypicalCombinations}, and have seen usage in gauging the success of a given team in areas such as scientific research \cite{PredictionAcademia} and online multiplayer games\cite{LolTeam}. Metrics used in team formation have involved properties derived from these factors, such as submodularity modelling diminishing returns\cite{submodularityteamformation} and distance encouraging diversity among team members\cite{gender}. Our work uses supermodularity in our goodness measure to model beneficial interaction within the team which we show to produce satisfactory results in our user study.

\textbf{Graph Kernels.}
Graph kernels are functions for measuring the similarity between graphs. Graph kernels can be roughly divided into a few families\cite{GraphKernelSurvey}: (1) Optimal assignment kernels, where similarity is defined by the best matching of nodes and/or edges between 2 graphs\cite{OptimalAssignmentVE}, (2) Subgraph pattern-based kernels, which uses the number of occurrences of pre-defined graphlets to define similarity \cite{GraphletKernel}, (3) Shortest-path kernels comparing the length of shortest paths between pairs of nodes in 2 graphs\cite{ShortestPath}, and (4) Random walk graph kernels, involving computing random walks on the product graph extendable to both node-labeled \cite{UkangKernel} and edge-labeled graphs \cite{SvnKernel}. A well-explored topic for random walk graph kernels is addressing its computational cost, as a direct kernel computation has a prohibitively high cost of $O(n^6)$. Speedup methods include speeding up individual kernel computations via methods such as Sylvester equations \cite{FastRWGK}\cite{SylvesterRWGK}, and reducing the total number of kernel computations needed when pairwise comparing numerous similar graphs\cite{TeamReplacement}. The latter speedup method has been shown to be achievable on node-labeled graphs kernels\cite{TeamReplacement}; In our work, we show that such strategies are also applicable to edge-labeled graph kernels.

\textbf{Best Effort Subgraph Matching.} Best Effort Subgraph Matching is the problem of finding the most similar subgraph to a given query graph in a larger graph. It is known that the problem is NP-Complete\cite{GraphSearchMultiAttribute}. A variety of applications have been recognized for subgraph matching such as Bioinformatics\cite{BioSubgrapMatching} and Team Formation\cite{du2017first}. Different problem formulations exist for subgraph matching: \cite{Tong2007FastBP} operates on weighted, attributed graphs (WAG) where each node has one attribute, while \cite{MAGE} operates on heterogeneous graphs. Our proposed method can be reformulated as a subgraph matching algorithm for WAGs with multi-attribute nodes using a query graph (the replaced team) as input.

\section{Conclusion}
In this paper, we introduce the problem of \textit{Subteam Replacement} to provide solutions for cases where a critical subteam in a large team becomes unavailable and needs to be replaced. To address the problem, we (1) formulate a novel random walk graph kernel which captures structural and skill similarity both locally and globally adaptable across different problem settings, (2) design a fast algorithm for the single-member replacement case, (3) build upon this algorithm to create our proposed solution \textsc{REFORM} for  \textit{Subteam Replacement}, and (4) prove that \textsc{REFORM} has a strict theoretical guarantee on its performance. Through experimenting with both synthetic and real datasets, we show that \textsc{REFORM} outperforms alternative choices both in terms of efficiency and quality and recommends more human-agreeable candidates compared to previous methods.

\appendix

\begin{figure*}[ht]
   \centering
   \includegraphics[width=0.8\linewidth]{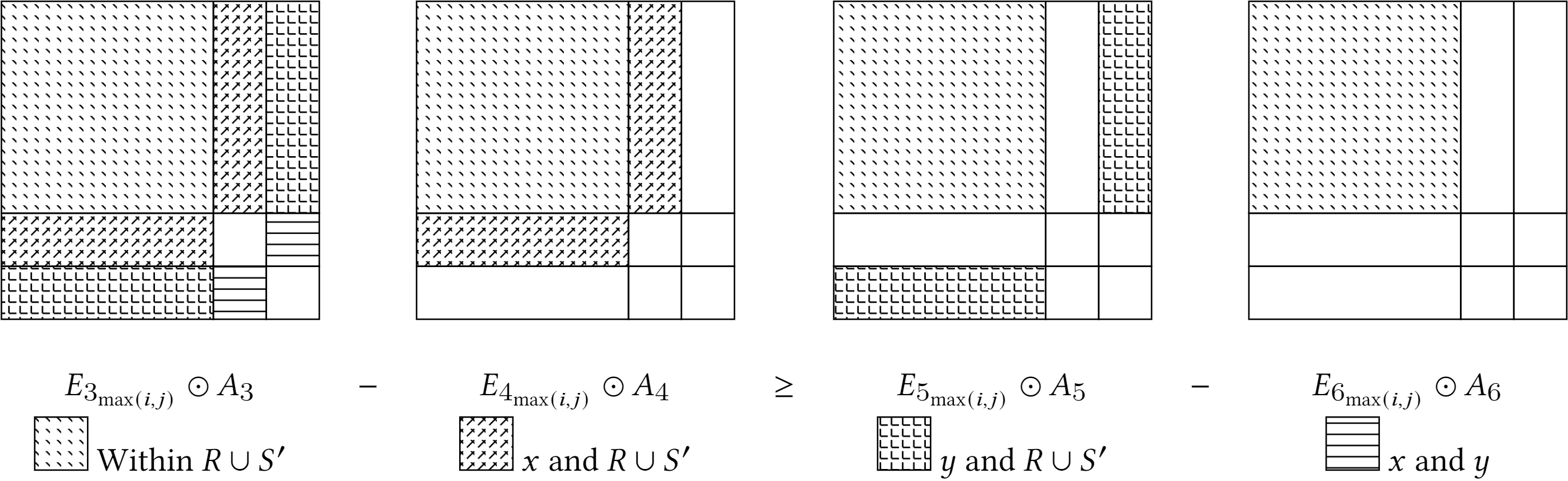}
   \captionsetup{labelformat=empty}
   \caption{\textbf{Fig. 13.} \textmd{A graphical illustration of the base case in Lemma 6}}
\end{figure*}

\textbf{Proof of Lemma~\ref{lm:supermodular}.}

\begin{proof} It is sufficient to show that for any $x, y \in V \setminus T$, $S' \subset V \setminus T$ the below inequality holds:
\begin{equation}
g(S'\cup\{x, y\}) - g(S'\cup\{x\}) \geq g(S'\cup\{y\}) - g(S')
\end{equation}
Which is equivalent to:
\begin{multline}
\widehat{\textnormal{Ker}}(G_T, G_{R\cup S'\cup\{x, y\}}) - \widehat{\textnormal{Ker}}(G_T, G_{R\cup S'\cup\{x\}}) \geq\\
\widehat{\textnormal{Ker}}(G_T, G_{R\cup S'\cup\{y\}}) - \widehat{\textnormal{Ker}}(G_T, G_{R\cup S'})
\end{multline}
For brevity, we will use the following symbols for the subgraphs:\begin{itemize}
    \item $G_3 = G_{R\cup S'\cup\{x, y\}} = \{A_{R\cup S'\cup\{x, y\}}, L_{R\cup S'\cup\{x, y\}}\}$
    \item $G_4 = G_{R\cup S'\cup\{x, o\}} = \{A_{R\cup S'\cup\{x, o\}}, L_{R\cup S'\cup\{x, o\}}\}$
    \item $G_5 = G_{R\cup S'\cup\{o, y\}} = \{A_{R\cup S'\cup\{o, y\}}, L_{R\cup S'\cup\{o, y\}}\}$ 
    \item $G_6 = G_{R\cup S'\cup\{o, o\}} = \{A_{R\cup S'\cup\{o, o\}}, L_{R\cup S'\cup\{o, o\}}\}$
\end{itemize}
Where $o$ is an isolated dummy node in the social network with a zero skill vector. It is trivial to show that $g(S'\cup\{y\}) = g(S'\cup\{o, y\})$ for $G_4$ and similarly for $G_5$ and $G_6$; the dummy nodes are added to ensure the size consistency between subgraphs. Without loss of generality, we assume the order of these nodes in the subgraph are as they are displayed in the above set of equations from left to right.
Using these notations the above inequality can be rewritten as:

\begin{multline}
\textnormal{sum}((I-G_{0\times3})^{-1}) - \textnormal{sum}((I-G_{0\times4})^{-1})\\
\geq \textnormal{sum}((I-G_{0\times5})^{-1}) - \textnormal{sum}((I-G_{0\times6})^{-1})
\end{multline}

Where $G_{\alpha \times \beta}$ is shorthand for $c(E_{\alpha \times \beta} \odot (A_\alpha \otimes A_\beta))$.

By Taylor expansion of the matrix inverse, it will therefore be sufficient to show that the following holds for all $k$:
\begin{equation}
G_{0\times3}^k - G_{0\times4}^k \geq G_{0\times5}^k - G_{0\times6}^k
\end{equation}

\textsc{Proof by induction - base case}: $k = 0$ is trivial, therefore we will start with $k = 1$:

\begin{equation}
G_{0\times3} - G_{0\times4} \geq G_{0\times5} - G_{0\times6}
\end{equation}

Fully expanding the above inequality, we get:

$\sum^l_{i=1}\sum^l_{j=i}cW[i,j]\!*\!(E_{0_{\textrm{max}(i,j)}}\!\otimes\! E_{3_{\textrm{max}(i,j)}})\!\odot\!(A_0\!\otimes\! A_3)$
$-\sum^l_{i=1}\sum^l_{j=i}cW[i,j]\!*\!(E_{0_{\textrm{max}(i,j)}}\!\otimes\! E_{4_{\textrm{max}(i,j)}})\!\odot\!(A_0\!\otimes\! A_4)$
$\geq\sum^l_{i=1}\sum^l_{j=i}cW[i,j]\!*\!(E_{0_{\textrm{max}(i,j)}}\!\otimes\! E_{5_{\textrm{max}(i,j)}})\!\odot\!(A_0\!\otimes\! A_5)$
$-\sum^l_{i=1}\sum^l_{j=i}cW[i,j]\!*\!(E_{0_{\textrm{max}(i,j)}}\!\otimes\! E_{6_{\textrm{max}(i,j)}})\!\odot\!(A_0\!\otimes\! A_6)$

By the mixed property of the Kronecker and Hadamard products, we rewrite the above as follows:

$\sum^l_{i=1}\sum^l_{j=i}cW[i,j]*(E_{0_{\textrm{max}(i,j)}}\odot A_0) \otimes(E_{3_{\textrm{max}(i,j)}}\odot A_3)$
$-\sum^l_{i=1}\sum^l_{j=i}cW[i,j]*(E_{0_{\textrm{max}(i,j)}}\odot A_0) \otimes(E_{4_{\textrm{max}(i,j)}}\odot A_4)$
$\geq\sum^l_{i=1}\sum^l_{j=i}cW[i,j]*(E_{0_{\textrm{max}(i,j)}}\odot A_0) \otimes(E_{5_{\textrm{max}(i,j)}}\odot A_5)$
$-\sum^l_{i=1}\sum^l_{j=i}cW[i,j]*(E_{0_{\textrm{max}(i,j)}}\odot A_0) \otimes(E_{6_{\textrm{max}(i,j)}}\odot A_6)$

It is evident that the base case holds given the below equation holds for all suitable $i, j$:
\begin{multline}
E_{3_{\textrm{max}(i,j)}}\odot A_3 - E_{4_{\textrm{max}(i,j)}}\odot A_4\\
\geq E_{5_{\textrm{max}(i,j)}}\odot A_5 - E_{6_{\textrm{max}(i,j)}}\odot A_6 
\end{multline}

As shown in Fig. 13, the only different elements in the matrices on the LHS and RHS are the two elements in the indices $(|R| + |S'| + 1, |R| + |S'| + 2)$ and $(|R| + |S'| + 2, |R| + |S'| + 1)$, which correspond to the adjacency between $x$ and $y$ scaled by the similarity of their skill vectors. We can confirm that these elements are $0$ in RHS and may be nonzero in LHS as both $x$ and $y$ are in $G_3$, therefore the base case holds.

\textsc{Proof by induction - inductive step}: assuming equation 12 holds for all $m \in \{1,...,k-1\}$, we will prove that equation 12 holds for $k$: 

\begin{multline*}
\medmath{\textnormal{LHS}: G_{0\times3}^k - G_{0\times4}^k} \\
\medmath{= (G_{0\times3}^{k-1} - G_{0\times4}^{k-1}) * (G_{0\times3} - G_{0\times4}) + G_{0\times3}^{k-1}G_{0\times4} + G_{0\times3}G_{0\times4}^{k-1}}
\end{multline*}

By the induction assumption, we get:

\begin{equation*}
\geq (G_{0\times5}^{k-1} - G_{0\times6}^{k-1}) * (G_{0\times5} - G_{0\times6}) + G_{0\times3}^{k-1}G_{0\times4} + G_{0\times3}G_{0\times4}^{k-1}
\end{equation*}

For our defined element-wise matrix inequality operator, we have that given any non-negative matrices $A, B, C, D$:\begin{itemize}
    \item $A \geq B \land C \geq D \implies AC \geq BD$
    \item  $A \geq B \implies A^k \geq B^k ,\forall k > 0$
\end{itemize}
We omit the proofs of the above 2 statements for brevity. Given the way the subgraphs are formed, we have $G_{0\times3} \geq G_{0\times5}$ and $G_{0\times4} \geq G_{0\times6}$ (see Fig. 2). Therefore, we get:
\begin{multline}
\medmath{\geq (G_{0\times5}^{k-1} - G_{0\times6}^{k-1}) * (G_{0\times5} - G_{0\times6}) + G_{0\times5}^{k-1}G_{0\times6} + G_{0\times5}G_{0\times6}^{k-1}}\\
\medmath{= G_{0\times5}^k - G_{0\times6}^k = \textnormal{RHS}}
\end{multline}

Thus finishing the proof.
\end{proof}

\bibliographystyle{IEEEtran}
\bibliography{bibli}

\begin{thebibliography}{10}
\providecommand{\url}[1]{#1}
\csname url@samestyle\endcsname
\providecommand{\newblock}{\relax}
\providecommand{\bibinfo}[2]{#2}
\providecommand{\BIBentrySTDinterwordspacing}{\spaceskip=0pt\relax}
\providecommand{\BIBentryALTinterwordstretchfactor}{4}
\providecommand{\BIBentryALTinterwordspacing}{\spaceskip=\fontdimen2\font plus
\BIBentryALTinterwordstretchfactor\fontdimen3\font minus
  \fontdimen4\font\relax}
\providecommand{\BIBforeignlanguage}[2]{{%
\expandafter\ifx\csname l@#1\endcsname\relax
\typeout{** WARNING: IEEEtran.bst: No hyphenation pattern has been}%
\typeout{** loaded for the language `#1'. Using the pattern for}%
\typeout{** the default language instead.}%
\else
\language=\csname l@#1\endcsname
\fi
#2}}
\providecommand{\BIBdecl}{\relax}
\BIBdecl

\bibitem{imdbexample}
S.~Han and S.~A. Ravid, ``Star turnover and the value of human
  capital—evidence from broadway shows,'' \emph{Management Science}, vol.~66,
  no.~2, pp. 958--978, 2020.

\bibitem{academiaexample}
R.~B. Zadeh, A.~D. Balakrishnan, S.~Kiesler, and J.~N. Cummings, ``What's in a
  move? normal disruption and a design challenge,'' in \emph{Proceedings of the
  SIGCHI Conference on Human Factors in Computing Systems}, 2011, pp.
  2897--2906.

\bibitem{hinds2000choosing}
P.~J. Hinds, K.~M. Carley, D.~Krackhardt, and D.~Wholey, ``Choosing work group
  members: Balancing similarity, competence, and familiarity,''
  \emph{Organizational behavior and human decision processes}, vol.~81, no.~2,
  pp. 226--251, 2000.

\bibitem{familiarTeam}
J.~N. Cummings and S.~Kiesler, ``Who collaborates successfully? prior
  experience reduces collaboration barriers in distributed interdisciplinary
  research,'' in \emph{Proceedings of the 2008 ACM conference on Computer
  supported cooperative work}, 2008, pp. 437--446.

\bibitem{TeamReplacement}
\BIBentryALTinterwordspacing
L.~Li, H.~Tong, N.~Cao, K.~Ehrlich, Y.-R. Lin, and N.~Buchler, ``Replacing the
  irreplaceable: Fast algorithms for team member recommendation,'' in
  \emph{Proceedings of the 24th International Conference on World Wide Web},
  ser. WWW '15.\hskip 1em plus 0.5em minus 0.4em\relax Republic and Canton of
  Geneva, CHE: International World Wide Web Conferences Steering Committee,
  2015, p. 636–646. [Online]. Available:
  \url{https://doi.org/10.1145/2736277.2741132}
\BIBentrySTDinterwordspacing

\bibitem{AtypicalCombinations}
\BIBentryALTinterwordspacing
B.~Uzzi, S.~Mukherjee, M.~Stringer, and B.~Jones, ``Atypical combinations and
  scientific impact,'' \emph{Science}, vol. 342, no. 6157, pp. 468--472, 2013.
  [Online]. Available:
  \url{https://science.sciencemag.org/content/342/6157/468}
\BIBentrySTDinterwordspacing

\bibitem{haas2016secrets}
M.~Haas and M.~Mortensen, ``The secrets of great teamwork,'' \emph{Harvard
  business review}, vol.~94, no.~6, pp. 70--76, 2016.

\bibitem{bonchi2011social}
F.~Bonchi, C.~Castillo, A.~Gionis, and A.~Jaimes, ``Social network analysis and
  mining for business applications,'' \emph{ACM Transactions on Intelligent
  Systems and Technology (TIST)}, vol.~2, no.~3, pp. 1--37, 2011.

\bibitem{GraphKernelSurvey}
N.~M. Kriege, F.~D. Johansson, and C.~Morris, ``A survey on graph kernels,''
  \emph{Applied Network Science}, vol.~5, pp. 1--42, 2020.

\bibitem{SvnKernel}
\BIBentryALTinterwordspacing
S.~Vishwanathan, N.~N. Schraudolph, R.~Kondor, and K.~M. Borgwardt, ``Graph
  kernels,'' \emph{Journal of Machine Learning Research}, vol.~11, no.~40, pp.
  1201--1242, 2010. [Online]. Available:
  \url{http://jmlr.org/papers/v11/vishwanathan10a.html}
\BIBentrySTDinterwordspacing

\bibitem{GraphSearchMultiAttribute}
S.~B. {Roy}, T.~{Eliassi-Rad}, and S.~{Papadimitriou}, ``Fast best-effort
  search on graphs with multiple attributes,'' \emph{IEEE Transactions on
  Knowledge and Data Engineering}, vol.~27, no.~3, pp. 755--768, 2015.

\bibitem{SupermodularCurvature}
\BIBentryALTinterwordspacing
W.~Bai and J.~Bilmes, ``Greed is still good: Maximizing monotone
  {S}ubmodular+{S}upermodular ({BP}) functions,'' in \emph{Proceedings of the
  35th International Conference on Machine Learning}, ser. Proceedings of
  Machine Learning Research, J.~Dy and A.~Krause, Eds., vol.~80.\hskip 1em plus
  0.5em minus 0.4em\relax PMLR, 10--15 Jul 2018, pp. 304--313. [Online].
  Available: \url{https://proceedings.mlr.press/v80/bai18a.html}
\BIBentrySTDinterwordspacing

\bibitem{gender}
\BIBentryALTinterwordspacing
J.~He, H.~Tong, Q.~Mei, and B.~Szymanski, ``Gender: A generic diversified
  ranking algorithm,'' in \emph{Advances in Neural Information Processing
  Systems}, F.~Pereira, C.~J.~C. Burges, L.~Bottou, and K.~Q. Weinberger, Eds.,
  vol.~25.\hskip 1em plus 0.5em minus 0.4em\relax Curran Associates, Inc.,
  2012. [Online]. Available:
  \url{https://proceedings.neurips.cc/paper/2012/file/7f24d240521d99071c93af3917215ef7-Paper.pdf}
\BIBentrySTDinterwordspacing

\bibitem{zzkarian1999forming}
A.~Zzkarian and A.~Kusiak, ``Forming teams: an analytical approach,'' \emph{IIE
  transactions}, vol.~31, no.~1, pp. 85--97, 1999.

\bibitem{baykasoglu2007project}
A.~Baykasoglu, T.~Dereli, and S.~Das, ``Project team selection using fuzzy
  optimization approach,'' \emph{Cybernetics and Systems: An International
  Journal}, vol.~38, no.~2, pp. 155--185, 2007.

\bibitem{cocktailparty}
\BIBentryALTinterwordspacing
M.~Sozio and A.~Gionis, ``The community-search problem and how to plan a
  successful cocktail party,'' in \emph{Proceedings of the 16th ACM SIGKDD
  International Conference on Knowledge Discovery and Data Mining}, ser. KDD
  '10.\hskip 1em plus 0.5em minus 0.4em\relax New York, NY, USA: Association
  for Computing Machinery, 2010, p. 939–948. [Online]. Available:
  \url{https://doi.org/10.1145/1835804.1835923}
\BIBentrySTDinterwordspacing

\bibitem{yin2018social}
X.~Yin, C.~Qu, Q.~Wang, F.~Wu, B.~Liu, F.~Chen, X.~Chen, and D.~Fang, ``Social
  connection aware team formation for participatory tasks,'' \emph{IEEE
  Access}, vol.~6, pp. 20\,309--20\,319, 2018.

\bibitem{nearoptimalexpertteam}
C.~Dorn and S.~Dustdar, ``Composing near-optimal expert teams: A trade-off
  between skills and connectivity,'' in \emph{OTM Conferences}, 2010.

\bibitem{TeamFormation2009}
\BIBentryALTinterwordspacing
T.~Lappas, K.~Liu, and E.~Terzi, ``Finding a team of experts in social
  networks,'' in \emph{Proceedings of the 15th ACM SIGKDD International
  Conference on Knowledge Discovery and Data Mining}, ser. KDD '09.\hskip 1em
  plus 0.5em minus 0.4em\relax New York, NY, USA: Association for Computing
  Machinery, 2009, p. 467–476. [Online]. Available:
  \url{https://doi.org/10.1145/1557019.1557074}
\BIBentrySTDinterwordspacing

\bibitem{OnlineTeamFormation}
\BIBentryALTinterwordspacing
A.~Anagnostopoulos, L.~Becchetti, C.~Castillo, A.~Gionis, and S.~Leonardi,
  ``Online team formation in social networks,'' in \emph{Proceedings of the
  21st International Conference on World Wide Web}, ser. WWW '12.\hskip 1em
  plus 0.5em minus 0.4em\relax New York, NY, USA: Association for Computing
  Machinery, 2012, p. 839–848. [Online]. Available:
  \url{https://doi.org/10.1145/2187836.2187950}
\BIBentrySTDinterwordspacing

\bibitem{communitysystem}
\BIBentryALTinterwordspacing
------, ``Power in unity: Forming teams in large-scale community systems,'' in
  \emph{Proceedings of the 19th ACM International Conference on Information and
  Knowledge Management}, ser. CIKM '10.\hskip 1em plus 0.5em minus 0.4em\relax
  New York, NY, USA: Association for Computing Machinery, 2010, p. 599–608.
  [Online]. Available: \url{https://doi.org/10.1145/1871437.1871515}
\BIBentrySTDinterwordspacing

\bibitem{PredictionAcademia}
\BIBentryALTinterwordspacing
D.~Wang, C.~Song, and A.-L. Barab{\'a}si, ``Quantifying long-term scientific
  impact,'' \emph{Science}, vol. 342, no. 6154, pp. 127--132, 2013. [Online].
  Available: \url{https://science.sciencemag.org/content/342/6154/127}
\BIBentrySTDinterwordspacing

\bibitem{LolTeam}
\BIBentryALTinterwordspacing
Y.~J. Kim, D.~Engel, A.~W. Woolley, J.~Y.-T. Lin, N.~McArthur, and T.~W.
  Malone, ``What makes a strong team? using collective intelligence to predict
  team performance in league of legends,'' in \emph{Proceedings of the 2017 ACM
  Conference on Computer Supported Cooperative Work and Social Computing}, ser.
  CSCW '17.\hskip 1em plus 0.5em minus 0.4em\relax New York, NY, USA:
  Association for Computing Machinery, 2017, p. 2316–2329. [Online].
  Available: \url{https://doi.org/10.1145/2998181.2998185}
\BIBentrySTDinterwordspacing

\bibitem{submodularityteamformation}
A.~Bhowmik, V.~Borkar, D.~Garg, and M.~Pallan, ``Submodularity in team
  formation problem,'' in \emph{Proceedings of the 2014 SIAM International
  Conference on Data Mining}.\hskip 1em plus 0.5em minus 0.4em\relax SIAM,
  2014, pp. 893--901.

\bibitem{OptimalAssignmentVE}
N.~M. Kriege, P.-L. Giscard, and R.~C. Wilson, ``On valid optimal assignment
  kernels and applications to graph classification,'' in \emph{Proceedings of
  the 30th International Conference on Neural Information Processing Systems},
  ser. NIPS'16.\hskip 1em plus 0.5em minus 0.4em\relax Red Hook, NY, USA:
  Curran Associates Inc., 2016, p. 1623–1631.

\bibitem{GraphletKernel}
\BIBentryALTinterwordspacing
N.~Shervashidze, S.~Vishwanathan, T.~Petri, K.~Mehlhorn, and K.~Borgwardt,
  ``Efficient graphlet kernels for large graph comparison,'' in
  \emph{Proceedings of the Twelth International Conference on Artificial
  Intelligence and Statistics}, ser. Proceedings of Machine Learning Research,
  D.~van Dyk and M.~Welling, Eds., vol.~5.\hskip 1em plus 0.5em minus
  0.4em\relax Hilton Clearwater Beach Resort, Clearwater Beach, Florida USA:
  PMLR, 16--18 Apr 2009, pp. 488--495. [Online]. Available:
  \url{https://proceedings.mlr.press/v5/shervashidze09a.html}
\BIBentrySTDinterwordspacing

\bibitem{ShortestPath}
K.~M. {Borgwardt} and H.~P. {Kriegel}, ``Shortest-path kernels on graphs,'' in
  \emph{Fifth IEEE International Conference on Data Mining (ICDM'05)}, 2005,
  pp. 8 pp.--.

\bibitem{UkangKernel}
U.~Kang, H.~Tong, and J.~Sun, ``Fast random walk graph kernel,''
  \emph{Proceedings of the 12th SIAM International Conference on Data Mining,
  SDM 2012}, pp. 828--838, 04 2012.

\bibitem{FastRWGK}
S.~Vishwanathan, K.~M. Borgwardt, N.~N. Schraudolph \emph{et~al.}, ``Fast
  computation of graph kernels,'' in \emph{NIPS}, vol.~19.\hskip 1em plus 0.5em
  minus 0.4em\relax Citeseer, 2006, pp. 131--138.

\bibitem{SylvesterRWGK}
K.~Borgwardt, N.~N. Schraudolph, and S.~Vishwanathan, ``Fast computation of
  graph kernels,'' in \emph{Advances in neural information processing systems},
  2007, pp. 1449--1456.

\bibitem{BioSubgrapMatching}
\BIBentryALTinterwordspacing
Y.~Tian, R.~C. McEachin, C.~Santos, D.~J. States, and J.~M. Patel, ``{SAGA: a
  subgraph matching tool for biological graphs},'' \emph{Bioinformatics},
  vol.~23, no.~2, pp. 232--239, 11 2006. [Online]. Available:
  \url{https://doi.org/10.1093/bioinformatics/btl571}
\BIBentrySTDinterwordspacing

\bibitem{du2017first}
B.~Du, S.~Zhang, N.~Cao, and H.~Tong, ``First: Fast interactive attributed
  subgraph matching,'' in \emph{Proceedings of the 23rd ACM SIGKDD
  International Conference on Knowledge Discovery and Data Mining}, 2017, pp.
  1447--1456.

\bibitem{Tong2007FastBP}
H.~Tong, C.~Faloutsos, B.~Gallagher, and T.~Eliassi-Rad, ``Fast best-effort
  pattern matching in large attributed graphs,'' in \emph{KDD '07}, 2007.

\bibitem{MAGE}
R.~{Pienta}, A.~{Tamersoy}, H.~{Tong}, and D.~H. {Chau}, ``Mage: Matching
  approximate patterns in richly-attributed graphs,'' in \emph{2014 IEEE
  International Conference on Big Data (Big Data)}, 2014, pp. 585--590.

\end{thebibliography}
\end{document}